\newtheorem{theorem}{Theorem}
\newtheorem{proof}{Proof}
\newtheorem{definition}{Definition}
\newtheorem{lemma}{Lemma}
\begin{document}
\title{Technical Report: Scheduling Flows with Multiple Service Frequency Constraints}
\author{Xu~Zheng,
        Zhipeng~Cai,~\IEEEmembership{Senior Member,~IEEE,}
        Jianzhong~Li,
        and~Hong~Gao

\thanks{X. Zheng, J. Li and H. Gao are with the School of Computer Science,
Harbin Institute of Technology, Harbin, 150001, China.
Email: \{zhengxu, lijzh, honggao\}@hit.edu.cn}

\thanks{Z. Cai is with the Department of Computer Science,
Georgia State University, 25 Park Place, Atlanta, GA 30303, USA.
Email: zcai@gsu.edu}}

\maketitle

\begin{abstract}

With the fast development of wireless technologies, wireless applications have invaded various areas in people's lives with a wide range of capabilities.
Guaranteeing Quality-of-Service (QoS) is the key to the success of those applications.
One of the QoS requirements, service frequency, is very important for tasks including multimedia transmission in the Internet of Things.
A service frequency constraint denotes the length of time period during which a link can transmit at least once.
Unfortunately, it has not been well addressed yet.
Therefore, this paper proposes a new framework to schedule multi transmitting flows in wireless networks considering service frequency constraint for each link.
In our model, the constraints for flows are heterogeneous due to the diversity of users' behaviors.
We first introduce a new definition for network stability with service frequency constraints and demonstrate that the novel scheduling policy is throughput-optimal in one fundamental category of network models.
After that, we discuss the performance of a wireless network with service frequency constraints from the views of capacity region and total queue length.
Finally, a series of evaluations indicate the proposed scheduling policy can guarantee service frequency and achieve a good performance on the aspect of
queue length of each flow.

\end{abstract}

\section{Introduction}
During the past decade, the diversity of wireless network architectures manifest a tremendous growth with the improvements of mobile devices and wireless communication techniques.
Novel systems like Internet of Things (IoTs) or Cyber Physical Systems (CPSs) \cite{atzori2010internet} have changed a wireless network from a simple data transmitting platform to a complicated one.
Such systems often request the support of real-time multimedia data transmission like Video on demand (VoD) and voice over IP (VoIP) \cite{chen2007applications},
which pose various Quality-of-Service (QoS) requirements for including packet delay constraints, delivery ratio, traffic admission control, link service regularity, \emph{etc.}
for link scheduling in wireless networks.
Motivated by this, a series of works have emerged which focus on scheduling traffics considering QoS constraints.
Many of them consider the transmission of packets with stringent delay constraints \cite{11jaramillo2011scheduling},
while some others focus on delivery ratio \cite{2013deficitkang2013performance}, admission control, \emph{etc}.
Surprisingly, service regularity, one of the most important QoS requirements for many streaming multi-media applications, has not been fully explored.
To the best of our knowledge, only few works \cite{13interservicetimeli2013throughput} \cite{13heavytailli2013heavy} concentrate on the link scheduling problem considering the
service regularity requirement.


Service regularity refers to the time between two consecutive services of a link.
It indicates a link will not wait for too much time for the next service and the waiting time between every consecutive transmission pair is relatively identical.
In traditional data-oriented wireless networks, a link has no constraint on service regularity as long as its overall throughput is guaranteed.
Such tolerance no longer exists in the current IoTs carrying out multimedia data.
For example, in some applications, videos are uploaded by the citizens to report emergencies around them.
The officers at the controlling center request the videos to be presented smoothly to support accurate and rapid decisions.
It indicates that the video blocks from each uploader should be transmitted periodically to guarantee the continuous playback of the video.
This puts forward a service regularity requirement for the scheduling policy.
Li {\it et al.} \cite{13interservicetimeli2013throughput} introduced a new parameter called \emph{time-since-last-service} to improve service regularity.
The time-since-last-service indirectly refers to the time between two consecutive transmissions of a link.
Their work balances the long-term average inter-service time of each user while still maintains the stability of the network.
However, it fails in the situation when the service regularity of each link is strictly bounded.
In this paper, we fill this gap by designing a novel scheduling policy.  
Within our framework, service frequency requirement, a variant of service regularity, of all the links can be completely satisfied.

Service frequency refers to the length of time during which a link obtains a chance to transmit.
A constraint on service frequency indicates that a link should be scheduled at least once in every frame if we partition time into frames with length equaling to the constraint.
Service frequency is closely related to service regularity in that it can bound the waiting time between two consecutive transmissions of a link, and can also lead to
a stable waiting time if the links are properly scheduled.
Service frequency is also essential for multimedia data transmission.
Again, notice that the smoothly real-time playback of videos or voices are very important for some crowdsourcing monitoring systems,
and such multimedia data is played in a batched manner \cite{liu2013user}.
An indispensable QoS requirement is to deliver the next block to the user before the end of her current one.
\emph{i.e.}, the user must get a new block in each frame whose length equals to the playing time of a block.

Furthermore, the service frequency requirements of users are often heterogeneous due to the diversity of users' behaviors.
For example, users may upload various types of multimedia data or videos with different resolutions.
Higher quality indicates larger service frequency when the size of a packet is fixed.
To the best of knowledge, this is the first work to investigate the scheduling of flows in wireless networks with heterogeneous service frequency constraints.

Service frequency requirements in fact make a scheduling policy a combination of the round-robin and maximum-throughput methods.
Users, on the one side, upload their multimedia data to participate in the monitoring tasks.
This is like the round-robin scheduling, where each user is served periodically.
At the same time, all users also hope their own applications running normally on mobile devices, which means the ones with longer packet queue must have higher priority to be served.
This necessitates maximum-throughput scheduling.
In this paper, we introduce a new definition for capacity region with service frequency constraints.
The novel capacity region is the set of packet-arriving rate vectors in which no user will suffer from buffer overflowing and all the constraints can be satisfied.
This new definition can also be utilized to judge the performance of a scheduling policy.
Furthermore, we propose a novel scheduling policy that considers service frequency constraints, and the policy is throughput-optimal in some types of networks.
Our main contributions are summarized as follows:
\begin{itemize}
\item We propose a novel network model accommodating service frequency constraints, which is more suitable for many real-time wireless applications.
We also provide new definitions of the stability of a system and the capacity region.
\item We design a new scheduling policy.
The policy is very easy to be implemented.
Furthermore, we prove that the policy is throughput-optimal in an extensively utilized category of network model. 
\item We compare the size of the new capacity region with the original one without the service frequency constraints.
Meanwhile, we analyze the traffic load of in a wireless network when a system is stable.
\item Simulation results reveal that our scheduling policy achieves a good performance in both queue length and service regularity.
\end{itemize}

The rest of the paper is organized as follows.
In Section 2, we have a literature review on the development of scheduling policies in wireless networks.
Section 3 introduces the network model.
In Section 4, the novel scheduling policy is proposed with some theoretical analysis.
Section 5 presents the performance of a wireless network with service frequency constraints.
The simulation results of our scheduling policy are illustrated in Section 6.
Section 7 concludes the paper.

\section{Related Works}
Scheduling is a fundamental issue for wireless networks.
It is well-known that the Maximum Weight scheduling policy is throughput-optimal \cite{27tassiulas1992stability}.
The weight is usually denoted as the ratio between queue length and transmitting rate.
Gupta and Shroff \cite{gupta2010delay} showed that a generalized maximum weight scheduling could also be throughput-optimal.
The policy scaled the queue length of each link by a weighted factor.
Some work \cite{boyaci2014optimal} considers queue length and delay under the max weight scheduling policy.

However, the problem of finding a disjoint set of links with the maximum weight under the general interference model is NP-Hard.
Hence, a maximum weight scheduling policy incurs high overhead.
Much effort has been spent on designing low complexity scheduling policies with throughput performance guarantees.
Some policies are based on the \textit{picking-and-comparing method} \cite{6pickcomparebui2009distributed} or \textit{Carrier Sensing Multiple Access (CSMA)} \cite{23CSMAni2012q}.
Those policies gradually achieve the optimal throughput after many iterations, but are insensitive to channel fading \cite{19opporscheliu2003framework}.
Lin and Shroff \cite{lin2006impact} proved that the maximum weight scheduling can achieve a guaranteed ratio of the capacity region.
The performance is lower bounded by the ratio bound of the weight of the maximal independent set.
Since then, many scheduling policies focus on finding a \textit{maximal weight independent set (MWIS)} of links.
The works in \cite{24centralsakai2003note} and \cite{28centraldwan2011wireless} present some centralized policies that can find a MWIS with provable ratio bounds.
The works in \cite{4distributedbasagni2001finding}, \cite{10distributedhoepman2004simple} \cite{13distributedjoo2013distributed} propose some distributed scheduling policies for different network models.

However, none of the aforementioned works deal with the QoS requirements of users.
Some scheduling policies are designed to meet the packet delay constraint for each user \cite{9hou2010utility}\cite{10jaramillo2010optimal}\cite{11jaramillo2011scheduling}.
They all partition time into frames, and packets will run out of time before the end of their arriving frames.
They mainly differ in the ways of dealing with channel fading \cite{8hou2010scheduling}, nonidentical delays \cite{11jaramillo2011scheduling}, \emph{etc}.
The work in \cite{2013deficitkang2013performance} proposes a new policy that accommodates packet delivery ratio.

Very few works \cite{13interservicetimeli2013throughput} \cite{13heavytailli2013heavy} deal with service regularity.
The work in \cite{13interservicetimeli2013throughput} introduces a new type of parameter to capture the performance of service regularity, which is called time-since-last-service.
It enriches the definition of weight to include the changing of this parameter.
The weight is a combination of queue length and \textit{time-since-last-service}.
Then under the maximum weight scheduling policy, a user will get a second service with a growing probability as \textit{time-since-last-service} increases, thus the inter-service time can be regulated.
Li \emph{et al}. \cite{13heavytailli2013heavy} extended their work to consider the inter-service time performance of rate vectors close to the edge of capacity region.
This work could still achieve the original capacity region, and only relatively improve the performance of service regularity.
Actually, this work can achieve an approximate guarantee on service regularity.
This is done by carefully choosing the parameters according to a comprehensive knowledge of users' behaviors, which is impossible in practice.
It would fail when the service regularity of each user is exactly bounded and the users' behaviors change often.
Besides, it cannot figure out the new capacity region in which the users follow strict service frequency requirements and the system is stable. 
This motivates us to design a novel scheduling policy which can satisfy the heterogeneous service frequency of each user.

\section{Network Model and Preliminaries}
\subsection{Network Model}
A network is composed of $N$ links, denoted as $\{l_1,l_2,\cdots,l_N\}$.
They transmit via a common wireless channel.
Each link connects one source node to one destination node.
Due to the inherent interference in wireless networks, two links can not transmit simultaneously when they share the same node or are geographically close to each other.
The conflicts among links are presented by \textbf{\textit{conflict graph}}  $G_{NN}=\{g_{ij}\}_{i,j=1,2,\cdots,N}$ with binary entry.
$g_{ij}=1$ means link $l_i$ and $l_j$ conflict with each other and can not transmit simultaneously, and $g_{ij}=0$ otherwise.
Time is partitioned into slots, and in each slot a link can successfully transmit only when all the other links remain waiting.

In each slot, there are $A_i[t]$ packets arriving at link $i$.
$A_i[t]$ is randomly and identically distributed over time with expectation $E(A_i[t])=\lambda_i$.
The second moments of $A_i[t]$ are finite for all $i=1,2,\cdots,N$ and $t=1,2,\cdots$.
These packets will join the delivery queue of link $i$ and be delivered to the destination.
We assume that the wireless channel is unreliable.
The channel condition of link $l_i$ refers to the probability $c_i$ that $l_i$  can successfully transmitted in a slot. 
Then bandwidth for each link is represented as the number of delivered packets, \textit{i.e.}, link $l_i$ can deliver $r_i$ packets per slot when succeeded.
Finally, each link has a service frequency constraint $\delta_i$, which means link $i$ needs to be scheduled at least once in every frame with $\delta_i$ slots.
In our model, $\lambda_i$s, $c_i$s, $r_i$s, and $\delta_i$s are all heterogeneous, which indicates the fact that the traffic load, communication capability and QoS requirements of different mobile devices are often various.
In this paper, we consider the case when the service frequency constraints are fixed, and deal with different arriving rates.

We use boolean variable $s_i[t] \in \{0,1\}$ to denote whether user $i$ is scheduled in slot $t$, with $1$ for scheduling and $0$ for waiting.
The vector $S[t]=\{s_1[t],s_2[t],\cdots,s_N[t]\}$ is called a scheduling for slot $t$.
In our model, it is guaranteed that
\begin{center}
$\sum_{i=1}^{N}s_i[t]$
\end{center}
is maximal and all chosen links does not conflict with each other for every time slot $t$.

$Q_i[t]$ is the time consumption of transmitting all packets on link $i$ at the beginning of slot $t$.
$Q_i[t]$ is determined by $Q_i[t-1]$, $\frac{A_i[t]}{r_ic_i}$, and $s_i[t-1]$ as follows:
\begin{equation}
Q_i[t]=Max\{Q_i[t-1] - s_i[t-1],0\}+ \frac{A_i[t]}{r_ic_i}
\end{equation}
where $0$ means that user $i$ has delivered all its packets in slot $t-1$.
$Q_i[t]$ is closely related to the packet size on link $i$, while it can simplify our analysis.

Now we formalize the stability of networks by the following definitions.

\begin{definition}
A network is stable under an arriving rate vector $\lambda=\{\lambda_1,\lambda_2,\cdots,\lambda_N\}$ when

(1) the underlying Markov chain $Q_i[t]$ for each link is positive recurrent, and
\begin{equation}
\lim\limits_{t\rightarrow\infty}\sum_{i=1}^{N}E(Q_i[t])\leq \infty,
\end{equation}

(2) for every link $i$, it is scheduled at least once in every $\delta_i$ slots.
\end{definition}

We also use the term $\lim\limits_{t\rightarrow\infty}\frac{\sum_{t'=1}^{t}\sum_{i=1}^{N}Q_i[t']}{t}\leq \infty$ in this paper, which is actually the same as Formula (2).
A scheduling policy refers to a series of scheduling $S[1],S[2],S[3],\cdots$, one for each time slot.
An arriving rate vector $\lambda=\{\lambda_i\} ~(i=1,\cdots,N)$ is \textit{supportable} by a network if the network is stable under $\lambda$  when it utilizes some scheduling policies.
All supportable rate vectors compose the \textit{capacity region} of the network, denoted by $\Lambda$.
The throughput-optimality of a scheduling policy is defined as follow:

\begin{definition}
A scheduling policy is throughput-optimal for a network if $\Lambda'=\Lambda$, where $\Lambda'$ is the set of rate vectors supported by the network when using the scheduling policy.
\end{definition}

We will show in next section that to design a throughput optimal solution is NP-hard.
Our object is to design a scheduling policy that can achieve a good performance on throughput while maintaining a polynomial time complexity.

\subsection{Preliminaries}
Now we introduce some knowledge regarding the positive recurrent of Markov chains and the upper bounds on queue length of some well-known scheduling policies.
There exist a lot of methods which prove the positive recurrent of Markov chains.
In this paper, we use the Foster's criteria for positive recurrent and ergodic Markov chains \cite{leonardi2001stability}, which is shown as follows.
\begin{theorem}
A countable Markov chain $Q[t]=\sum_{i=1}^{N}Q_i[t]$ is positive recurrent and ergodic if and only if there exists a positive function $V>0$
and a finite set of state $\epsilon_0$, such that

(1) when $Q[t]\in \epsilon_0$, $\Delta(Q[t])<\infty$,

(2) when $Q[t]\notin \epsilon_0$, $\exists \xi > 0$, and $\Delta(Q[t])<-\xi$,

\noindent where
\begin{equation}
\Delta(Q[t])=E(V(Q[t+1])-V(Q[t])|Q[t]).
\end{equation}
$\blacksquare$

\end{theorem}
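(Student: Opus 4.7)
The plan is to establish the ``if and only if'' by handling the two directions separately (implicitly assuming irreducibility, which the queueing dynamics of Equation~(1) provide). For \emph{sufficiency} (the two drift conditions imply positive recurrence and ergodicity), I would run a Lyapunov drift-to-hitting-time argument anchored at the finite set $\epsilon_0$. For \emph{necessity}, I would reverse the construction by taking $V(x)$ itself to be the expected hitting time of $\epsilon_0$ from state $x$, and then verifying that the required drift conditions hold essentially by definition.

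For sufficiency, fix $Q[0]=x\notin\epsilon_0$ and set $\tau=\inf\{t\geq 1:Q[t]\in\epsilon_0\}$. The key step is to show that $M_t := V(Q[t\wedge\tau])+\xi\,(t\wedge\tau)$ is a non-negative supermartingale: on $\{t<\tau\}$, condition~(2) gives $E[V(Q[t+1])-V(Q[t])\mid Q[t]]<-\xi$, so the accumulated $\xi$ per step is absorbed by the negative drift. Optional stopping at $t\wedge n$, followed by monotone convergence in $n$, then yields
\begin{equation}
\xi\,E_x[\tau]\leq V(x)<\infty.
\end{equation}
Together with condition~(1), which bounds the drift starting from inside the finite set, this forces the expected return time to any single state of $\epsilon_0$ to be finite. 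Positive recurrence (and, via irreducibility and aperiodicity, ergodicity) follows.

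For necessity, assume the chain is positive recurrent and pick any finite set $\epsilon_0$ intersecting the recurrent class. Define $V(x)=E_x[\tau_{\epsilon_0}]$; by positive recurrence this is finite for every $x$. Conditioning on the one-step transition for $x\notin\epsilon_0$ yields
\begin{equation}
V(x)=1+\sum_{y}p(x,y)\,V(y),
\end{equation}
so $\Delta V(x)=-1$ and condition~(2) holds with $\xi=1$. For $x\in\epsilon_0$, the drift is a finite average of finite quantities, giving condition~(1).

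The main obstacle is the necessity direction: ensuring that $V(x)=E_x[\tau_{\epsilon_0}]$ is finite \emph{for every} $x$ in the countable state space, which forces one to restrict to the irreducible recurrent class or to choose $\epsilon_0$ to be reached with probability one from every state. A secondary technical issue, on the sufficiency side, is that optional stopping must be applied carefully to a countable-state chain with possibly unbounded jumps; the clean way is to stop at $t\wedge\tau\wedge n$ first and then take $n\to\infty$ via monotone convergence, which is the step that genuinely uses $V>0$.
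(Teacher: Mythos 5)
The paper offers no proof of this statement to compare against: Theorem~1 is Foster's criterion, quoted as a preliminary from \cite{leonardi2001stability} with the $\blacksquare$ placed directly after the statement, so your proposal stands or falls on its own. What you give is the standard Foster--Lyapunov argument --- a nonnegative supermartingale plus optional stopping for sufficiency, and the expected-hitting-time Lyapunov function for necessity --- and as a sketch it is essentially correct. In particular, your sufficiency route is sound: stopping $M_{t}=V(Q[t\wedge\tau])+\xi\,(t\wedge\tau)$ at $t\wedge\tau\wedge n$ and letting $n\to\infty$ by monotone convergence gives $\xi\,E_x[\tau]\leq V(x)$ with no boundedness assumption on the jumps, exactly as you say.

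Three small patches are still needed. First, in the necessity direction, $V(x)=E_x[\tau_{\epsilon_0}]$ vanishes on $\epsilon_0$, violating the hypothesis $V>0$; take $V(x)=1+E_x[\tau_{\epsilon_0}]$ (or define $\tau_{\epsilon_0}$ as the first time $t\geq 1$ in $\epsilon_0$), which leaves the drift computation unchanged. Second, your justification of condition~(1) as ``a finite average of finite quantities'' is not right as stated: $\sum_{y}p(x,y)V(y)$ is a \emph{countable} sum, so termwise finiteness of $V$ does not yield finiteness of the sum. What closes it is positive recurrence itself: for $x\in\epsilon_0$ the mean return time $E_x[\tau^{+}_{\epsilon_0}]$ to the finite set is finite (it is dominated by the mean return time to any fixed state), and the first-step decomposition $E_x[\tau^{+}_{\epsilon_0}]=1+\sum_{y\notin\epsilon_0}p(x,y)E_y[\tau_{\epsilon_0}]$ then bounds the drift. (A cosmetic point: $\Delta V=-1$ is not $<-1$, so take $\xi=1/2$.) Third, on the sufficiency side, the final passage from ``finite expected hitting time of the finite set $\epsilon_0$, plus finite one-step drift inside it'' to positive recurrence of the chain deserves one explicit line: under irreducibility, finite mean return time to a finite set implies finite mean return time to each individual state, e.g.\ via the induced chain on $\epsilon_0$. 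Relatedly, ergodicity beyond positive recurrence requires aperiodicity, which the drift conditions do not supply; you correctly relegate irreducibility to a standing assumption (justified by the dynamics of Equation~(1)), and aperiodicity should be treated the same way.
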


There also exist many results on the queue length of a network with the maximum weight scheduling policy.
Our work is not dominated by the choice of the result.
Thus, this paper uses a simple bound given in \cite{gupta2010delay}.

\begin{theorem}
Given a rate vector $\lambda=\{\lambda_i\} ~(i=1,\cdots,N)$ inside the capacity region, the following bound on the expectation of the sum of queue length is true in a system under the maximum weight scheduling policy:
\begin{center}
$\sum_{i=1}^{N}E(Q_i)\leq\frac{\lambda_i+Var[A_i]-\lambda^2_i}{2}$
\end{center}
\end{theorem}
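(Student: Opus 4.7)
The plan is to use the standard Lyapunov drift technique, adapted to the queue dynamics of Equation~(1). I would start by choosing the quadratic Lyapunov function $V(Q[t]) = \tfrac{1}{2}\sum_{i=1}^{N} Q_i^2[t]$ and computing the one-step conditional drift $\Delta V(Q[t]) = E\bigl[V(Q[t{+}1]) - V(Q[t]) \mid Q[t]\bigr]$. Plugging in the recursion $Q_i[t{+}1] = \max\{Q_i[t]-s_i[t],0\} + A_i[t{+}1]/(r_i c_i)$ and applying the elementary inequality $(\max\{a-b,0\}+c)^2 \le a^2 + b^2 + c^2 - 2a(b-c)$ gives, after summing over $i$,
\begin{equation*}
\Delta V(Q[t]) \le K - \sum_{i=1}^{N} Q_i[t]\Bigl( E[s_i[t]\mid Q[t]] - \tfrac{\lambda_i}{r_i c_i}\Bigr),
\end{equation*}
where $K$ is a finite constant that collects the second-moment terms $E[s_i^2[t]]$ and $E[(A_i[t]/(r_ic_i))^2]$; the latter is where the $\lambda_i + \mathrm{Var}[A_i] - \lambda_i^2$ quantity in the claimed bound enters.

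Next I would invoke the hypothesis that $\lambda$ lies strictly inside $\Lambda$. By the definition of the capacity region, there exists $\epsilon > 0$ and a (possibly randomized) stationary scheduling policy that serves link $i$ at an average rate at least $\lambda_i/(r_ic_i) + \epsilon$ while respecting the conflict graph $G_{NN}$. Because the Maximum Weight policy, by construction, maximizes $\sum_i Q_i[t] s_i[t]$ over feasible schedules $S[t]$, its expected weighted service dominates that of any other feasible schedule, including the randomized comparison policy. Substituting the comparison policy's service rates into the drift bound therefore yields
\begin{equation*}
\Delta V(Q[t]) \le K - \epsilon \sum_{i=1}^{N} Q_i[t].
\end{equation*}

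From here the argument is routine: the drift inequality satisfies Foster's criterion (Theorem~3), so $Q[t]$ is positive recurrent; taking expectations of both sides, summing over $t=0,\ldots,T-1$, dividing by $T$, and letting $T \to \infty$ makes the telescoping $E[V(Q[T])-V(Q[0])]/T$ term vanish and delivers a bound of the form $\sum_i E(Q_i) \le K/\epsilon$ whose numerator, after expanding $K$ with $\mathrm{Var}[A_i] = E[A_i^2] - \lambda_i^2$, is exactly the expression in the theorem statement (up to the slack factor inherited from \cite{gupta2010delay}).

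The main obstacle I anticipate is not the algebra of the drift itself but the construction of the randomized comparison policy that achieves the $\epsilon$-interior rates while honoring the conflict constraints. Since the definition of $\Lambda$ in this paper is given through stability under \emph{some} scheduling policy rather than as an explicit polytope of feasible service rates, I would need a short argument (via a time-averaging/Carath\'eodory-type decomposition of any stabilizing policy into convex combinations of feasible schedules) to produce the stationary randomized policy used in the comparison step. Everything else is a direct specialization of the standard Lyapunov-drift bound of \cite{gupta2010delay} to the queue definition in Equation~(1).
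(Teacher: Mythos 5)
The first thing to note is that the paper never proves this statement at all: it is Theorem~2 of the preliminaries, imported from \cite{gupta2010delay} with the proof replaced by a citation and a black square. So there is no in-paper argument to diverge from, and the right question is whether your reconstruction is a sound proof of the quoted result. In outline it is: the quadratic Lyapunov function, the inequality $(\max\{a-b,0\}+c)^2 \le a^2+b^2+c^2-2a(b-c)$ (valid for $a,b,c\ge 0$, by checking the cases $a\ge b$ and $a<b$), the domination of any feasible randomized schedule's weighted service by Maximum Weight, and the telescoping/Foster step are exactly the standard drift machinery by which bounds of this type are established in the cited literature. You also correctly identify the one genuinely non-routine step: because this paper defines $\Lambda$ operationally (stability under \emph{some} policy) rather than as an explicit polytope of service-rate vectors, the stationary randomized comparison policy with rates $\lambda_i/(r_ic_i)+\epsilon$ must be manufactured, and your time-averaging/Carath\'eodory decomposition is the right tool. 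Note that this forces you to read ``inside the capacity region'' as \emph{strictly} inside, since otherwise no $\epsilon>0$ exists; and matching the exact constant in the numerator requires keeping the cross term $-2bc$ rather than discarding it, though that is bookkeeping rather than substance.

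One substantive remark on where you land. Your argument delivers $\sum_i E(Q_i)\le K/\epsilon$, a bound that degrades as $\lambda$ approaches the boundary of $\Lambda$, and you dismiss the mismatch as ``up to the slack factor.'' Be aware the discrepancy is not a defect of your proof but of the printed statement: as written, the right-hand side has a dangling free index $i$ and no slack denominator, and it cannot be literally true, since $\sum_i E(Q_i)$ must diverge as $\lambda$ tends to the boundary while the printed bound stays bounded (a single queue with Poisson arrivals of rate $\lambda\to 1$ already violates $E(Q)\le \lambda-\lambda^2/2$). The way the paper itself later uses the result confirms this: the constant $B_1$ in Appendix~A is obtained by substituting the normalized rates $\lambda_i'=(\lambda_i/(r_ic_i)-1/\delta_i)/(1-\sum_j 1/\delta_j)$, which reintroduces precisely the $1/(1-\rho)$-type factor your $K/\epsilon$ carries. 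So your derivation is the defensible version of the theorem; state the slack factor explicitly in the final bound rather than absorbing it into a parenthetical, and the proof is complete.
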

$\blacksquare$

\section{Multi-Stage Maximum Weight Scheduling Policy}
In this section, we first prove the complexity of the problem and propose a new scheduling policy which is a combination of round-robin and  maximum-weight scheduling.
In our scheduling policy, links are first categorized into different stages indicating the priority level this link should be served. 
This mechanism can ensure the service frequency constraint of each link.
Meanwhile, a maximum weight scheduling policy is used for allocating the service among links of the same stage, resulting in a link with longer queue size to be served as soon as possible.
We call this framework \textit{Multi-Stage Maximum Weight} scheduling policy (MSMW for short).
\subsection{Time Complexity of Throughput-optimal Solutions}

The design of throughput-optimal policy is NP-complete.
We consider the case where $\delta_i\rightarrow\infty$ for all links.
Then our problem is same with the basic scheduling problem in general wireless networks,
which is proved to be NP-complete \cite{gupta2010delay}.
To achieve an optimal solution means the algorithm must find a max weight independent set in a general graph in each slot,
which can only be solved in exponential time.

\subsection{MSMW Scheduling Policy}
MSMW needs to take care of service frequency and queue length of each link.
Therefore, it has two parameters: \textit{stage} and \textit{weight}.

\textit{stage} shows whether a link has been scheduled in the current frame and the number of time slots to the end of the frame.
The stage of link $i$ in slot $t$ is calculated as follows:

\begin{equation}
ST_i[t]=
\begin{cases}
0 &\text{when link $i$ is scheduled}\\
&\text{\ in the current frame}\\
\ \\
\delta_i - (t-1)\ mod\ \delta_i &\text{others}\\
\end{cases}
\end{equation}
where $\delta_i$ is the service frequency constraint of link $i$.
In each slot, MSMW first considers the links with small stages which have not been scheduled.
Then MSMW can continue in a round-robin manner since unscheduled links will eventually obtain a higher priority.

\textit{weight} represents the traffic load of each link, as is used by most of the previous works.
MSMW serves the link with the largest weight when multiple links are in the same stage.
We define \textit{weight} of link $i$ at time slot $t$ as follows:
\begin{equation}
w_i[t]={Q_i[t]}
\end{equation}
We will use $w_i[t]$ and $Q_i[t]$ alternately in the rest of the paper.

In each time slot, MSMW has two phases:

\textit{Phase 1 : Update}

At the beginning of each time slot, MSMW updates \textit{stage} and \textit{weight} for each link with Equations (4) and (5), 
based on the transmission information in the last slot and the arriving packets in the current slot.
Then the policy partitions the links into different groups according to their stages.
We denote the groups as $GP=\{gp_0,gp_1,gp_2,\cdots,gp_M\}$. 
The links in a same group have identical stage.
We also assume that the stage of links in each group increases with the group number, \emph{i.e.}, $gp_1$ includes the links with the minimum stage while $gp_M$ contains the links with the largest stage.
$gp_0$ is composed of the links whose stages are zero.
All the links will fall in $gp_0$ when they have been scheduled in the current frame.

\textit{Phase 2: Scheduling}

MSMW iteratively selects the links from group $gp_M$ when $M=0$, or from $gp_1$ to $gp_M$ and then $gp_0$ when $M\neq 0$.
During each iteration, MSMW chooses a link from first group containing links with following equation,
\begin{equation}
l_i=argmax_{l_j\in gp_k}w_j[t],
\end{equation}
where $gp_k$ is the first group still has links remaining.
After the selection, MSMW deletes all the conflicted links of $l_i$ and starts next iteration.
The scheduling phase ends when there is no link left in current slot.
\ \ \ $\blacksquare$\\

Now we use a simple example to demonstrate the scheduling policy.
Suppose there are $2$ links in the network, with parameters $A_1[t]=\lambda_1$, $A_2[t]=\lambda_2$, $\frac{\lambda_1}{r_1c_1}=\frac{1}{2}$, and $\frac{\lambda_2}{r_2c_2}=\frac{1}{8}$. $\delta_1 = 2$ and $\delta_2 = 4$. 
The process is shown in Fig.1.
In slot $1$, $ST_1[1] = 2$ and $ST_2[1]=4$.
Then $l_1$ is scheduled since it has a smallest stage.
In slot $2$, $ST_1[2] = 0$ and $ST_2[2]=3$.
Then $l_2$ is chosen.
In slot $3$, $ST_1[3] = 2$ since a new frame of $l_1$ begins.
$ST_2[3]=0$.
Then $l_1$ is selected.
In slot $4$, $ST_1[4] = 0$ and $ST_2[4]=0$. $w_1[4]= 1/2$ and $w_2[4]= 1/4$.
Then $l_1$ is scheduled since it has a larger weight.
The following process is similar, thus omitted here.

\begin{figure}
\centering
\includegraphics[scale=0.4]{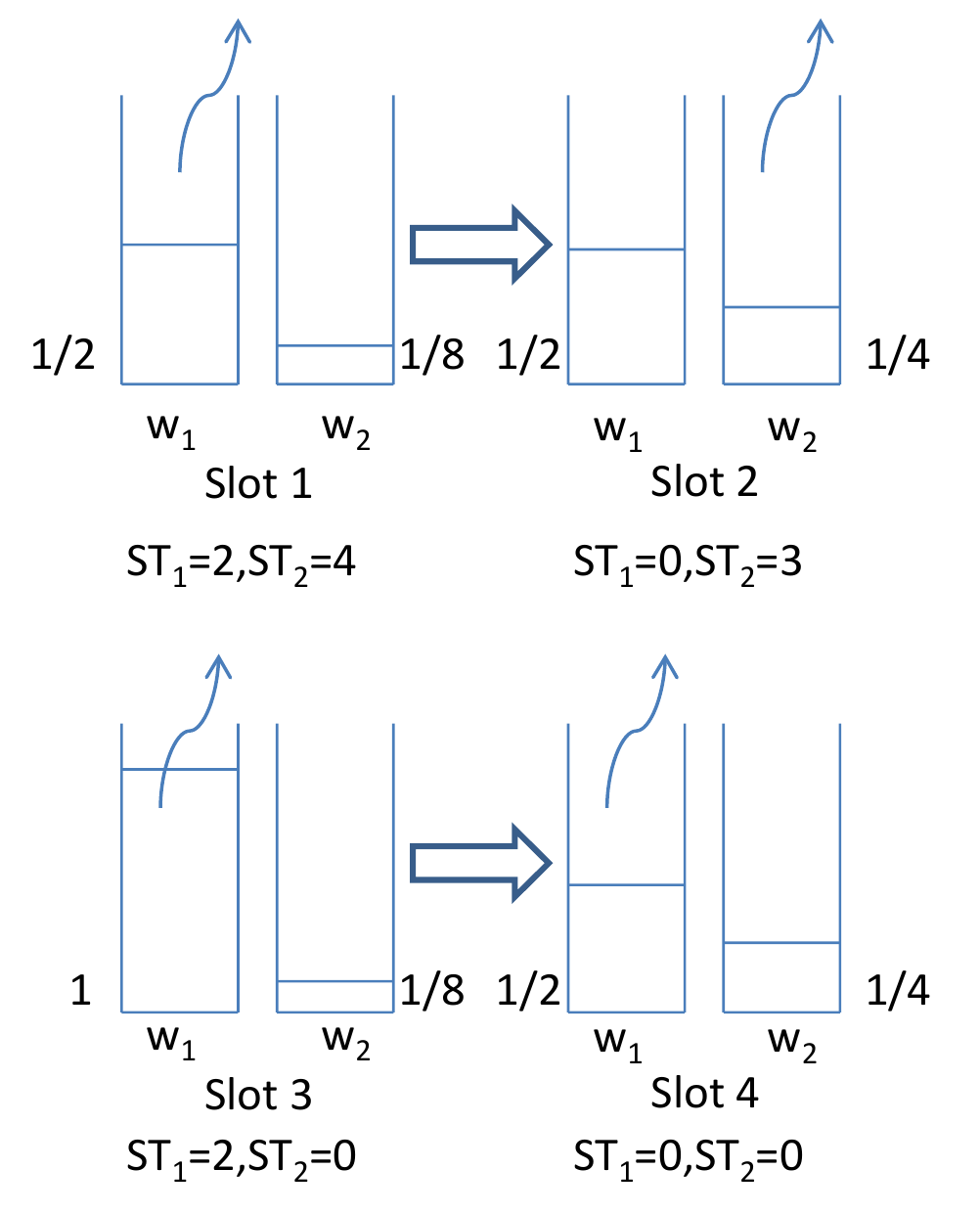}
\caption{An Instance of MSMW.}
\label{fig:example}
\end{figure}

The computational complexity of MSMW is analyzed as follows.
The time complexity of updating all the stages and weights is $O(N)$.
Partitioning links into groups also costs $O(N)$ scale of time.
During phase $2$, the time complexity is $O(N^2)$ in the worst case, and $O(1)$ when there are constant number of links in the first group.
As a whole, the time complexity of MSMW in each time slot is $O(N^2)$.

\subsection{Performance of MSMW in Collocated Networks}
In this part, we demonstrate that MSMW is throughput optimal in a specific type of networks, \textit{i.e.}, collocated networks \cite{13interservicetimeli2013throughput}.
In a collocated network, all links conflict with each other, and only one link can be scheduled in each slot.
This model is used by systems like cellular networks, which is an fundamental architecture to support the environment monitoring systems.
We first show in Lemma 2 that all the supportable rate vectors follow the second condition in \textit{Definition 1} when a collocated network employs the MSMW scheduling policy.
Then in Lemma 5, all these vectors are proved to follow the first condition, which means the underlying queue length is finite.

We proposes a necessary condition for all the arriving rate vectors in the capacity region.
This condition means that for a sufficient long time, the number of slots used to meet the service frequency constraints of all the links is no more than the total number of time slots.

\begin{lemma}
The service frequency constraints of all the links must follow equation
\begin{equation}
\sum_{i=1}^{N}\frac{1}{\delta_i}\leq 1,
\end{equation}
if the rate vector is supportable by the network.
\end{lemma}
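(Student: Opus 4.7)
The plan is to prove this by a counting argument on the number of scheduled slots in a collocated network. Since we are in the collocated setting introduced at the start of Section 4.3, at most one link can transmit in any given slot, so across any window of $T$ consecutive slots the total number of scheduling events summed over all links is at most $T$.

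First I would fix an arbitrary supportable rate vector $\lambda$ and some scheduling policy under which the network is stable. By the second clause of Definition 1, every link $i$ is scheduled at least once in every $\delta_i$ slots. I would then count: over the window $[1,T]$, partition time into $\lfloor T/\delta_i\rfloor$ disjoint frames of length $\delta_i$ (with a possible leftover); link $i$ must receive a scheduling event in each such frame, so it is scheduled at least $\lfloor T/\delta_i\rfloor$ times during $[1,T]$. Summing this lower bound over $i$ and using the collocated constraint that the total number of scheduling events in $[1,T]$ is at most $T$, I obtain
\begin{equation*}
\sum_{i=1}^{N} \left\lfloor \frac{T}{\delta_i} \right\rfloor \leq T.
\end{equation*}

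Dividing by $T$ and letting $T\to\infty$ gives $\sum_{i=1}^{N} 1/\delta_i \leq 1$, as required. I would note that the limit passes through the sum cleanly because $N$ is finite and each $\lfloor T/\delta_i\rfloor/T \to 1/\delta_i$.

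The only real subtlety, and hence the main thing to state carefully rather than a genuine obstacle, is justifying why the total scheduled slots in a window cannot exceed $T$: this is exactly the collocated interference structure (every pair of links conflicts, so $\sum_i s_i[t]\leq 1$ for every slot $t$). I would make this explicit at the start of the argument so that it is clear the lemma is proved in the collocated setting that governs this subsection; extending the inequality to more general conflict graphs would require a more refined argument based on the fractional chromatic number of the conflict graph, which is outside the scope of this lemma.
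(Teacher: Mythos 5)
Your argument is correct and is essentially the paper's own proof: both count the services forced by the frequency constraints against the at-most-one-transmission-per-slot budget of a collocated network, concluding $\sum_{i=1}^{N}1/\delta_i\leq 1$. The only difference is cosmetic --- the paper argues by contradiction and writes $T/\delta_i$ without floors, while you argue directly with $\lfloor T/\delta_i\rfloor$ and let $T\to\infty$, a minor tightening of the same idea.
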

\begin{proof}
The proof is by contradiction.
Assume that $\sum_{i=1}^{N}\frac{1}{\delta_i}> 1$.
Then the total services that all the links get are no less than $\sum_{i=1}^{N}\frac{T}{\delta_i}$ in $T$ time slots.
This is due to the fact that each link $i$ gets a service every $\delta_i$ slots.
However, it is infeasible since $\sum_{i=1}^{N}\frac{T}{\delta_i}=T\sum_{i=1}^{N}\frac{1}{\delta_i}>T$.
It contradicts the fact that the arriving rate vector is supportable.
\end{proof}

Lemma 2 shows that MSMW can ensure the constraints when the arriving rate vector follows Lemma 1.

\begin{lemma}
The service frequency constraints of the links in a network is achievable under MSMW if $\sum_{i=1}^{N}\frac{1}{\delta_i}\leq 1$.
\end{lemma}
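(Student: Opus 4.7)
The plan is a standard EDF (earliest-deadline-first) feasibility argument, exploiting the fact that in a collocated network MSMW schedules exactly one link per slot and that a link's stage at time $t$ is exactly the number of slots remaining until its current frame-deadline; hence MSMW's rule ``smallest positive stage first'' is EDF, with queue length used only as a tiebreaker.

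For the contradiction I would suppose MSMW first fails at some slot $t^{*}$: link $i$ is not served in the frame $[t^{*}-\delta_i+1,t^{*}]$. I define $t_0$ as the latest slot in $[0,t^{*})$ at which MSMW serves a link whose current-frame deadline is strictly greater than $t^{*}$, and set $t_0=0$ if no such slot exists. From the EDF priority I would then derive two structural facts: (a) in each slot of $(t_0,t^{*}]$ MSMW schedules a link whose frame-deadline is at most $t^{*}$, because otherwise that slot would itself contradict the maximality of $t_0$; and (b) at the moment MSMW makes its choice in slot $t_0$, no link has an unserved frame with deadline at most $t^{*}$, since such a link would have smaller stage than whatever MSMW actually picked. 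Together (a) and (b) force every frame MSMW services in $(t_0,t^{*}]$ to start strictly after $t_0$, and in particular link $i$'s missed frame $[t^{*}-\delta_i+1,t^{*}]$ must itself be one such frame.

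For the counting step, for each link $j$ the number of its frames fully contained in $(t_0,t^{*}]$ is at most $(t^{*}-t_0)/\delta_j$, since they are disjoint length-$\delta_j$ intervals inside a window of length $t^{*}-t_0$. Summing over $j$ and invoking the hypothesis $\sum_{i=1}^{N}1/\delta_i \leq 1$ yields at most $t^{*}-t_0$ such frames in total. MSMW uses all $t^{*}-t_0$ slots of $(t_0,t^{*}]$ on such frames, and serves each one at most once (service sets the link's stage to $0$ for the remainder of the frame), so MSMW must complete every one of them---including link $i$'s missed frame, contradicting the assumption.

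The main subtlety I anticipate is the careful justification of fact (b), because at $t_0$ some links may already be in stage $0$ (they have been served earlier in their current frame) and must not be counted as contributing uncompleted demand in $(t_0,t^{*}]$. I would handle this by a short case split on each link's stage at $t_0$: stage-$0$ links contribute nothing, while every positive-stage link with deadline $\leq t^{*}$ would, by Phase~2 of MSMW (which scans $gp_1,\ldots,gp_M$ before $gp_0$), have been strictly preferred to the link MSMW actually picked at $t_0$, giving the desired contradiction.
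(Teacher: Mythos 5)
Your skeleton is the same busy-window counting argument the paper uses (its backward search for the link $l_x$ and slot $t_y$ plays exactly the role of your $t_0$), and most of your steps are sound: the stage $ST_i[t]=\delta_i-(t-1)\bmod \delta_i$ is indeed the number of slots remaining to the current frame deadline, so MSMW's group scan is EDF; the maximality of $t_0$ gives your fact (a); and the scan order $gp_1,\ldots,gp_M$ before $gp_0$ gives your fact (b) at the single slot $t_0$. The genuine gap is \emph{inside} the window, not at its left end: your definition of $t_0$ only excludes services to frames whose deadline exceeds $t^*$, but it does not exclude slots $s\in(t_0,t^*]$ at which \emph{every} link happens to have stage $0$ and MSMW therefore serves some link from $gp_0$, i.e., re-serves a frame already served once. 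Such a slot is perfectly consistent with the maximality of $t_0$ whenever the re-served link's current frame ends by $t^*$ (for instance, early in the window, before link $i$'s missed frame has even begun, a short-$\delta$ link can be re-served during a slack slot), yet it wrecks both claims your count relies on: the re-serviced frame may straddle $t_0$, so ``every serviced frame starts strictly after $t_0$'' fails, and that frame is served twice, so ``serves each one at most once'' fails. Once even one slot of $(t_0,t^*]$ is wasted this way, the forcing chain (number of services $=t^*-t_0\le$ number of fully contained frames $\le t^*-t_0$) breaks, and you can no longer conclude that link $i$'s missed frame gets served. Your closing case split does not repair this, because it only inspects link stages at the slot $t_0$ itself.

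The fix is small and standard: redefine $t_0$ as the latest slot of $[0,t^*)$ at which MSMW either serves a link whose current-frame deadline exceeds $t^*$ \emph{or} serves a link from $gp_0$ (equivalently, a slot in which all links have stage $0$), with $t_0=0$ if neither occurs. At such a $t_0$, every frame straddling $t_0$ with deadline at most $t^*$ is already served --- trivially in the $gp_0$ case, and by your EDF-preference argument in the positive-stage case --- so every service in $(t_0,t^*]$ is the \emph{first} service of a frame that starts after $t_0$ and ends by $t^*$. Then your count goes through verbatim: there are exactly $t^*-t_0$ services, each to a distinct frame fully contained in $(t_0,t^*]$, while at most $\sum_j\lfloor (t^*-t_0)/\delta_j\rfloor\le t^*-t_0$ such frames exist and link $i$'s frame is among them but unserved, a contradiction. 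With this amendment your proof is correct and is, in substance, the same argument as the paper's; the paper's own write-up guards the reset point through its condition that ``all the intervals that cover $t_y$ are scheduled earlier than slot $t_y$,'' though your version is considerably more explicit about why the window is busy with first services.
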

\begin{proof}
Assume link $l_p$ does not get service in the $k$th interval that starts at slot $t_k$.

Consider an arbitrary link $l_i$.
It will not receive service in the interval that includes the final slot of $k$th interval of $l_p$.
The reason is that $l_p$ always has a smaller stage than $l_i$ in that interval of link $l_i$.
It will only be scheduled once in a interval if the time period is totally covered by the $k$th interval of $l_p$.
If $l_i$ has a $j$th interval that covers slot $t_k$ and its scheduling in this interval happens no early than slot $t_k$, all the other links can only be scheduled once per interval during the slots of $j$th interval for $l_i$.

We can iteratively look for such intervals back and find the first link $l_x$ and its $y$th interval follow either of the following principals: 1) all the other links have new intervals that start at the same slot with  $y$th interval of link $l_x$, 2) all the intervals that cover $t_y$ are scheduled early than slot $t_y$.

Now we consider the time between $t_y$ and $t_k+\delta_p$.
Every link can only be scheduled once in each interval that is totally covered by this time period.
This result can be combined with the fact that link $l_p$ does not get service in the $k$th interval:  
\begin{equation*}
\sum_{i=1}^{N}\lfloor{\frac{t_k+\delta_p}{\delta_i}}\rfloor -1\leq \sum_{i=1}^{N}{\frac{t_k+\delta_p}{\delta_i}} -1\leq t_k+\delta_p-1<t_k+\delta_p
\end{equation*}
Then the number of feasible services is less than the size of time slots, which is a contradiction.
It indicates a link could be scheduled at least once per interval when $\sum_{i=1}^{N}\frac{1}{\delta_i}\leq 1$.
\end{proof}

MSMW can satisfy the second condition in \textit{Definition 1} for all the rate vectors inside the capacity region since it is a subset of the vectors following Lemma 1.

Now we analyze the performance of MSMW on the first condition in \textit{Definition 1}.
Our proof is based on the conclusion in \textit{Theorem 1}.
MSMW actually runs in two manners: the maximum weight manner when it chooses a link from $gp_0$, and the round-robin manner otherwise.
As we know, the maximum weight scheduling is throughput-optimal since it always chooses the link with the globally maximum weight.
Our work turns to prove that the queue length will never be infinite when we scheduling a link not in $gp_0$.
Specially, we need to discuss the performance of the system when the queue length of scheduled link is less than its transmission rate.

We start with another necessary condition for any supportable rate vector.
It bounds the size of a packet that cannot be delivered if every link only transmits once in every frame.
Notice that Lemma 3 is not restricted to MSMW.

\begin{lemma}
For any supportable rate vector, it must follow that
$\sum_{1}^{N}Max\{\frac{\lambda_i}{r_ic_i}-\frac{1}{\delta_i},0\}< 1-\sum_{1}^{N}\frac{1}{\delta_i}.$
\end{lemma}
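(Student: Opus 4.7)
The proof follows a ``time budget'' accounting argument: the single shared channel can serve at most one link per slot, so the long-run fractions of slots allocated to the links must sum to at most one, while each link's fraction must simultaneously cover both its mandatory service-frequency share $1/\delta_i$ and the drift share $\lambda_i/(r_ic_i)$ required for queue stability. Summing these two lower bounds over all links and comparing against unity yields precisely the claimed inequality after rearranging.

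Concretely, I would fix an arbitrary scheduling policy that supports $\lambda$ and let $\pi_i = \lim_{T\to\infty}\frac{1}{T}\sum_{t=1}^{T}s_i[t]$ denote the long-run fraction of slots in which link $i$ is scheduled. Because $\lambda$ is supportable, two lower bounds on $\pi_i$ must hold simultaneously. The service-frequency condition of Definition 1(2) forces $\pi_i \geq 1/\delta_i$, since link $i$ is guaranteed a slot within every window of length $\delta_i$. Positive recurrence of the Markov chain $Q_i[t]$ in Definition 1(1), combined with the update rule $Q_i[t]=\max\{Q_i[t-1]-s_i[t-1],0\}+A_i[t]/(r_ic_i)$, forces the drift $\pi_i - \lambda_i/(r_ic_i)$ to be \emph{strictly} positive. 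Hence $\pi_i \geq \frac{1}{\delta_i} + \max\{\frac{\lambda_i}{r_ic_i} - \frac{1}{\delta_i},\,0\}$, with strict inequality whenever the $\max$-term is active for at least one link.

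Next, invoking the collocated feasibility bound $\sum_i s_i[t]\leq 1$ per slot (the same per-slot bound implicit in the proof of Lemma 1), I would sum the per-link inequalities to obtain
\begin{equation*}
\sum_{i=1}^{N}\frac{1}{\delta_i} + \sum_{i=1}^{N}\max\Bigl\{\frac{\lambda_i}{r_ic_i} - \frac{1}{\delta_i},\,0\Bigr\} \;\leq\; \sum_{i=1}^{N}\pi_i \;\leq\; 1.
\end{equation*}
The strictness established in the previous step promotes the outer inequality to strict, and rearranging gives the desired bound.

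The main obstacle is rigorously upgrading $\pi_i \geq \lambda_i/(r_ic_i)$ to a strict inequality. The natural route is Foster's criterion (Theorem 1): if the drift were exactly zero, the required condition $\Delta(Q[t]) < -\xi$ for some fixed $\xi>0$ outside a finite set of states could not be satisfied, contradicting the positive recurrence hypothesis. A secondary subtlety is the degenerate case in which the $\max$-term vanishes for every $i$, where the claim reduces to Lemma 1 and should be dispatched by a brief separate remark rather than woven into the main chain.
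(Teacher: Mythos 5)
Your proposal is at bottom the same time-budget accounting as the paper's own proof: the paper counts slots over a long horizon $T$ --- mandatory services cost $\sum_i T/\delta_i$ slots, the excess traffic costs $T\sum_i \max\{\frac{\lambda_i}{r_ic_i}-\frac{1}{\delta_i},0\}$ slots, and both must fit inside $T$ --- while you phrase the identical decomposition through long-run service fractions, using $\pi_i \geq \max\{\frac{1}{\delta_i},\frac{\lambda_i}{r_ic_i}\} = \frac{1}{\delta_i}+\max\{\frac{\lambda_i}{r_ic_i}-\frac{1}{\delta_i},0\}$ summed against $\sum_i \pi_i \leq 1$. Your version is the more careful of the two: the paper simply asserts the strict inequality, whereas you correctly locate strictness in positive recurrence forcing $\pi_i > \frac{\lambda_i}{r_ic_i}$ (two technical footnotes: the Ces\`{a}ro limits $\pi_i$ need not exist, so you should work with $\liminf$'s, which only strengthens your inequalities; and the strict drift bound needs nondegenerate arrivals, $\mathrm{Var}(A_i)>0$, since a deterministic, exactly-matched arrival stream keeps the queue constant and hence trivially positive recurrent).

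The genuine gap is the degenerate case you flag at the end, and it cannot be dispatched the way you suggest, because the lemma as stated is false there. If $\frac{\lambda_i}{r_ic_i} < \frac{1}{\delta_i}$ for every $i$, the claimed strict inequality reduces to $\sum_i \frac{1}{\delta_i} < 1$ --- but Lemma 1 yields only $\leq$, and supportable vectors with $\sum_i \frac{1}{\delta_i} = 1$ exist: take two links with $\delta_1=\delta_2=2$ and small arrival rates, and alternate them; every frequency constraint is met and both queues are positive recurrent, yet the claim reads $0<0$. So ``reduces to Lemma 1'' proves the non-strict bound, not the strict one. Note that the paper's own proof silently shares this defect (when every max-term vanishes, no extra slots are needed and its counting yields no constraint beyond Lemma 1), and the paper implicitly concedes the point in Theorem 5, which treats $\epsilon = 1-\sum_i \frac{1}{\delta_i}=0$ as a live, supportable regime. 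Your main chain is sound whenever at least one max-term is positive, since that link's strictly positive drift margin propagates through the sum; for a faithful statement you should weaken the conclusion to $\leq$ in general, with strictness exactly when some link has $\frac{\lambda_i}{r_ic_i} > \frac{1}{\delta_i}$.
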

\begin{proof}
The proof is straightforward.
For a sufficient long time $T$, the expectation of the size of a packet that fails to be delivered equals to $\sum_{1}^{N} Max\{T\times\lambda_i-\frac{T\times r_ic_i}{\delta_i},0\}$ 
if each link is only scheduled once in every frame.
Delivering these packets totally costs $T\times\sum_{1}^{N}Max\{\frac{\lambda_i}{r_ic_i}-\frac{1}{\delta_i},0\}$ slots.
Meanwhile, the number of slots left is $T-\sum_{1}^{N}\frac{T}{\delta_i}$.
Thus, $\sum_{1}^{N}Max\{\frac{\lambda_i}{r_ic_i}-\frac{1}{\delta_i},0\} < 1-\sum_{1}^{N}\frac{1}{\delta_i}$ must hold.
\end{proof}

In Lemma 4, we show that the expectation of the total queue length is bounded in slot $t$ when a link $i$ with $ST_i[t] \neq 0$ is scheduled.
Denote $T_0$ as the least common multiple of $\delta_1, \delta_2, \cdots, \delta_N$.
\begin{lemma}
When MSMW schedules a link not in $gp_0$ in slot $t$, the expectation of the total queue length $E(\sum_{i=1}^{N}Q_i[t])\leq B$, where $B=N+h*B_1+\sum_{i=1}^{N}(\sum_{j=1}^{N}\frac{1}{\delta_j}*\frac{\lambda_i}{r_ic_i}*T_0)$. $B$ and $h$ are constants, and $B_1=\frac{1}{2(1-\sum_{j=1}^{N}\frac{1}{\delta_j})^2}\sum_{l_i\in\{l_j|\frac{\lambda_j}{r_jc_j}-\frac{1}{\delta_j}>0\}}[(1-\sum_{j=1}^{N}\frac{1}{\delta_j})(\frac{\lambda_i}{r_ic_i}-\frac{1}{\delta_i})+Var(\frac{A_i[t]}{r_ic_i}-\frac{1}{\delta_i})-(\frac{\lambda_i}{r_ic_i}-\frac{1}{\delta_i})^2]$.
\end{lemma}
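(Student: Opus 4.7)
The plan is to bound $E(\sum_i Q_i[t])$ at any slot $t$ in which MSMW is in its ``frame-deadline'' mode (i.e.\ serves some $l_i$ with $ST_i[t]\neq 0$) by decomposing the workload of each link into a mandatory part, handled by the once-per-frame guaranteed services from Lemma 2, and a residual part, handled by maximum-weight-style scheduling on the remaining slots. The structure of the stated bound $B$ already suggests this split: the $h\cdot B_1$ term is a maximum-weight bound (Theorem 2) applied to the residual process, the summation involving $T_0$ captures the within-period fluctuation over one full common frame, and the additive $N$ absorbs rounding/boundary effects.

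First I would introduce, for each link $i$, the effective load $\tilde\lambda_i=\max\{\tfrac{\lambda_i}{r_ic_i}-\tfrac{1}{\delta_i},0\}$ and effective service opportunity $1-\sum_j\tfrac{1}{\delta_j}$, observing via Lemma 3 that $\sum_i\tilde\lambda_i<1-\sum_j\tfrac{1}{\delta_j}$, so the residual workload lives strictly inside a reduced capacity region. Next I would split each $Q_i[t]$ as the sum of a ``deterministic'' debit accumulated between guaranteed services within a frame (whose magnitude over one period $T_0$ is bounded by $\tfrac{\lambda_i}{r_ic_i}\cdot T_0$ multiplied by the fraction of slots consumed by mandatory services $\sum_j\tfrac{1}{\delta_j}$, yielding precisely the $\sum_i \sum_j\tfrac{1}{\delta_j}\tfrac{\lambda_i}{r_ic_i} T_0$ term) and a ``residual'' queue driven by the arrivals beyond the frame guarantee. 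Third, I would argue that on the subsequence of slots where MSMW is outside $gp_0$, the residual queue evolves as if scheduled by max-weight on a system with arrival rates $\tilde\lambda_i$ and a reduced service slot fraction, so that Theorem 2 applies and yields the coefficient $B_1$; the constant $h$ just folds in the ratio between real slots and residual slots. Finally I would add $N$ to absorb the at-most-one-unit-per-link slack from rounding of $\lfloor\cdot\rfloor$-style identities inherited from Lemma 2's proof and conclude the stated bound.

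The main obstacle I expect is justifying the reduction rigorously: MSMW does not literally decompose the queues, and when it is outside $gp_0$ it is picking by \emph{stage} first and only then by weight, so the scheduling is not a clean maximum-weight policy even on the residual process. To handle this I would couple the true dynamics with an auxiliary system in which the mandatory once-per-frame service is treated as a constant ``drain'' of $\tfrac{1}{\delta_i}$ per slot from $Q_i$, and then show (via a Lyapunov-drift argument in the spirit of Foster's criterion from Theorem 1) that the residual queue $\tilde Q_i[t]=Q_i[t]-\text{(mandatory portion)}$ has nonpositive expected drift outside a finite set, with the same drift inequality as the reduced max-weight system. The second delicate point is keeping the period-$T_0$ fluctuation term tight, since the mandatory services arrive in a pattern that varies within a period; bounding this by a worst-case accumulation over $T_0$ yields exactly the last summand of $B$ and keeps the constants as stated.
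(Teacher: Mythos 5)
Your proposal mirrors the paper's Appendix~A proof essentially step for step: the same split of each link's load into a once-per-frame (round-robin) component and a residual component whose normalized rates $(\frac{\lambda_i}{r_ic_i}-\frac{1}{\delta_i})/(1-\sum_j\frac{1}{\delta_j})$ lie strictly inside the reduced capacity region by Lemma~3 so that Theorem~2 yields $B_1$, the same worst-case accumulation over one common period $T_0$ giving the $\sum_i\sum_j\frac{1}{\delta_j}\cdot\frac{\lambda_i}{r_ic_i}\cdot T_0$ term, the same per-link unit slack giving $N$ (in the paper it arises as $\sum_i \min\{\frac{1}{\delta_i},\frac{\lambda_i}{r_ic_i}\}\delta_i\leq N$ at the sampling instants $kT_0$), and the same unquantified constant $h$ covering the fact that MSMW selects by total weight rather than by the residual queue alone --- a step the paper treats exactly as informally as you do, via a ``the largest residual queue eventually becomes the largest total queue'' argument. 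One small slip worth noting: the residual packets are drained in the slots where MSMW selects from $gp_0$ (the non-$gp_0$ slots being the mandatory once-per-frame services), so your third step has the subsequences inverted, but since your opening paragraph states the decomposition with the correct orientation this does not change the argument.
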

\begin{proof}
The proof of lemma is composed of two parts.
The first part estimates the total queue length of the system at slot $k*T_0$, where $k$ is a positive integer.
The second part gives the upper bound of $E(\sum_{i=1}^{N}Q_i[t])$ when MSMW schedules a link not in $gp_0$.
The bound also holds for the time when MSMW selects a link in $gp_0$.
We discard this case since it has nothing to do with our subsequent proofs.
The details of the proof is given in Appendix A.
\end{proof}

According to Lemma 4, the total queue length of the system is bounded by a constant $B$ when MSMW schedules a link in $gp_i,i\neq 0$.
Now we can prove that for any supportable rate vector, the underlying Markov chain is positive recurrent under MSMW.
The proof is based on the Foster's criteria in Theorem $1$.

\begin{lemma}
For any supportable arriving rate vector, the underlying Markov chain $Q[t]=\sum_{i=1}^{N}Q_i[t]$ is positive recurrent when we apply MSMW to schedule the links.
\end{lemma}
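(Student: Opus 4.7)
The plan is to apply Foster's criterion (Theorem 1) with the quadratic Lyapunov function $V(Q[t])=\sum_{i=1}^{N}Q_i[t]^{2}$. Expanding one step of the queue recursion $Q_i[t+1]=\max\{Q_i[t]-s_i[t],0\}+A_i[t+1]/(r_ic_i)$ and taking conditional expectations (the contribution of the $\max\{\cdot,0\}$ boundary is a bounded term because $s_i[t]\in\{0,1\}$), one obtains the standard drift inequality
\begin{equation*}
\Delta(Q[t])\le -2\sum_{i=1}^{N}Q_i[t]\Bigl(E[s_i[t]\mid Q[t]]-\tfrac{\lambda_i}{r_ic_i}\Bigr)+K,
\end{equation*}
with $K<\infty$ because the second moments of $A_i[t]$ are finite.

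The key structural observation is that MSMW operates in exactly two regimes: pure max-weight scheduling whenever the chosen link is drawn from $gp_0$, and a round-robin-flavored priority rule whenever the chosen link lies in some $gp_k$ with $k\neq 0$. In the max-weight regime, for any rate vector strictly interior to the capacity region the classical Tassiulas-style argument produces some $\epsilon>0$ and a randomized reference schedule for which $\sum_i Q_i[t]\,E[s_i[t]\mid Q[t]]\ge \sum_i Q_i[t](\lambda_i/(r_ic_i)+\epsilon)$; hence the one-step drift is at most $-2\epsilon\sum_i Q_i[t]+K$, which falls below any prescribed $-\xi$ once $\sum_i Q_i[t]$ is sufficiently large.

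To handle the non-$gp_0$ regime I would invoke Lemma 4: whenever MSMW selects a link outside $gp_0$, the total queue length is controlled by the constant $B$. Accordingly, I would take the candidate Foster set $\epsilon_0=\{q:\sum_i q_i\le B'\}$ with $B'$ chosen strictly larger than both $B$ and the threshold $(K+\xi)/(2\epsilon)$ coming from the max-weight drift estimate. Any state outside $\epsilon_0$ then forces MSMW into the max-weight regime, and the negative-drift bound from the previous paragraph delivers $\Delta(Q[t])<-\xi$. Inside the finite set $\epsilon_0$ the drift is trivially bounded by $K$, which is all Foster's criterion requires. Positive recurrence of $Q[t]=\sum_i Q_i[t]$ then follows.

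The principal obstacle I anticipate is making the ``outside $\epsilon_0$ implies max-weight regime'' step strictly pointwise rather than only in expectation, because Lemma 4 is phrased as a bound on $E(\sum_i Q_i[t])$. Passing to a pointwise statement will require either reading Lemma 4's proof carefully (it appears to be derived from conservation-style bookkeeping over the window $T_0=\mathrm{lcm}(\delta_1,\dots,\delta_N)$, which should yield a sample-path rather than purely expectation bound) or averaging the one-step drift over a full $T_0$-window and comparing the $T_0$-step drift to the $T_0$-step expected queue growth. All other ingredients are routine quadratic-Lyapunov max-weight calculations.
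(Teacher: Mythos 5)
Your proposal follows essentially the same route as the paper: a quadratic Lyapunov function, Foster's criterion (Theorem 1), Lemma 4 to handle slots where MSMW serves a link outside $gp_0$, and the classical max-weight negative-drift estimate for slots where it serves a link in $gp_0$. The one substantive divergence is how the exceptional set is constructed, and that is exactly where your flagged obstacle lives. The paper does not build a sublevel set $\{q:\sum_i q_i\le B'\}$; it defines $\epsilon_{rr}$ \emph{behaviorally}, as the set of states $Q[t]$ at which MSMW schedules a link not in $gp_0$, and pairs it with the bounded-drift set $\epsilon_{mw}$ coming from the standard max-weight analysis. On $\epsilon_{rr}\cup\epsilon_{mw}$ the drift is declared bounded (for $\epsilon_{rr}$ this uses Lemma 4's constant $B$ to control the cross term $\sum_i Q_i[t]\,E(A_i[t+1]-r_ic_is_i[t])$), and off that union the policy is in pure max-weight mode \emph{by definition} of $\epsilon_{rr}$, so the drift is strictly negative. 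This sidesteps your ``expectation versus pointwise'' difficulty entirely: no implication of the form ``large total queue $\Rightarrow$ max-weight regime'' is ever needed, because the state space is partitioned by scheduling behavior rather than by queue size.

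That said, your candor about the gap is warranted, because the paper's construction trades your gap for a mirror-image one. Theorem 1 requires the exceptional set to be \emph{finite}, and the paper never verifies that $\epsilon_{rr}$ (or $\epsilon_{mw}$) is finite: Lemma 4 bounds only the \emph{expected} total queue length on $\epsilon_{rr}$, which does not make $\epsilon_{rr}$ a finite set of states (note also that $Q_i[t]$ evolves by real-valued increments $A_i[t]/(r_ic_i)$, so the chain is not even on a discrete state space in general, a point both you and the paper pass over). Moreover, the paper's use of the unconditional bound $E(\sum_i Q_i[t])\le B$ inside a drift term conditioned on $Q[t]$ is loose in the same way your sublevel-set step is. Your proposed repairs --- extracting a sample-path bound from Lemma 4's conservation bookkeeping over the window $T_0$, or replacing the one-step drift with a $T_0$-step drift --- are precisely what would be needed to make either your sublevel-set version or the paper's behavioral version fully rigorous. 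In short: same approach as the paper, slightly more careful about what Foster's criterion actually demands, and the one step you could not close pointwise is also left open in the paper's own proof.
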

\begin{proof}
We use the Lyapunov function defined as
\begin{center}
	$V(Q[t])=\frac{1}{2}\sum_{i=1}^{N}Q_i[t]^2.$
\end{center}

Then
\begin{equation}
\begin{split}
\Delta(Q[t])&=E(V(Q[t+1])-V(Q[t])|\boldsymbol{Q[t]})\\
&=\frac{1}{2}\sum_{i=1}^{N}E((Q_i[t+1]-Q_i[t])\\
&\text{\ \ \ \ \ \ \ \ \ \ \ \ \ \ }*(Q_i[t+1]+Q_i[t])|\boldsymbol{Q[t]})\\
&=\frac{1}{2}\sum_{i=1}^{N}E((A_i[t+1]-r_ic_is[t])\\
&\text{\ \ \ \ \ \ \ \ \ \ \ \ \ \ }*(2Q_i[t]+A_i[t+1]-r_ic_is[t])|\boldsymbol{Q[t]})\\
&=\frac{1}{2}\sum_{i=1}^{N}E((A_i[t+1]-r_ic_is[t])\\
&\text{\ \ \ \ \ \ \ \ \ \ \ \ \ \ }*(2Q_i[t]+A_i[t+1]-r_ic_is[t])|\boldsymbol{Q[t]})\\
&=\sum_{i=1}^{N}E(Q_i[t](A_i[t+1]-r_ic_is[t])|\boldsymbol{Q[t]})\\
&\text{\ \ \ \ \ \ \ \ \ \ \ \ \ \ }+\frac{1}{2}\sum_{i=1}^{N}E((A_i[t+1]-r_ic_is[t])^2|\boldsymbol{Q[t]})
\end{split}
\end{equation}

The second part of the last equation in (8) is bounded since the second moment of $A_i[t]$ is finite and it is independent of the current queue size $Q[t]$.
The first part is also bounded by $B*\sum_{i=1}^{N}E(A_i[t+1]-r_ic_is[t])$ if the links not in $gp_0$ are scheduled.
Denote $\epsilon_{rr}$ as the set of  all $Q[t]$s that MSMW schedules a link not in $gp_0$.
Then when $Q[t]\in\epsilon_{rr}$, the drift $\Delta(Q[t])$ is bounded.

When $Q[t]\notin\epsilon_{rr}$, MSMW schedules the link with the globally maximum weight.
Then the system is stable according to the conclusion in \cite{gupta2010delay}.
As proved by the previous work, there is a set of states $\epsilon_{mw}$ in which the drift $\Delta(Q[t])$ is bounded.
And for $Q[t]\notin\epsilon_{mw}$, $\Delta(Q[t])<0$.

Finally, we have

(1) when $Q[t]\in\epsilon_{rr}\cup \epsilon_{mw} $, $\Delta(Q[t])<\infty$,

(2) when $Q[t]\notin\epsilon_{rr}\cup \epsilon_{mw} $, $\Delta(Q[t])<0$.
\end{proof}

MSMW can support all the rate vectors in the capacity region by combining Lemma 2 and Lemma 5.
We conclude it in the following theorem, and the proof is straightforward.

\begin{theorem}
MSMW is a throughput-optimal scheduling policy under the proposed network model and the definition of stability.
\end{theorem}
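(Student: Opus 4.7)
The plan is to prove Theorem 2 as essentially a bookkeeping synthesis of Lemmas 1, 2, and 5, since all the real work has already been done in those lemmas. By Definition 2, throughput-optimality means $\Lambda' = \Lambda$, so I would split the argument into the two inclusions.

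First I would dispose of the easy direction, $\Lambda' \subseteq \Lambda$. This is immediate from the definition of $\Lambda$: any arriving rate vector that MSMW stabilizes is supportable by \emph{some} scheduling policy (namely MSMW itself), hence lies in the capacity region. No lemma is needed here.

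Next I would handle the non-trivial direction, $\Lambda \subseteq \Lambda'$. Take an arbitrary $\lambda \in \Lambda$. I must verify both clauses of Definition 1 when the network runs MSMW under $\lambda$. For clause (2) — the service frequency constraint that each link $i$ be scheduled at least once per $\delta_i$ slots — I would first invoke Lemma 1 to conclude that $\sum_{i=1}^{N} 1/\delta_i \leq 1$, since this is a necessary condition for $\lambda$ to be supportable by any policy. Then Lemma 2 applies directly: under the hypothesis $\sum_{i=1}^{N} 1/\delta_i \leq 1$, MSMW guarantees that every link is served in each of its frames. For clause (1) — positive recurrence of the chain $Q[t] = \sum_i Q_i[t]$ together with a finite long-run expected total queue length — I would cite Lemma 5, which establishes positive recurrence of the underlying Markov chain for any supportable rate vector under MSMW; the finite expected total queue length then follows from ergodicity together with the bounded-drift analysis built up in Lemma 4 (the round-robin states contribute at most the constant $B$, and the max-weight states contribute a negative drift outside a finite exceptional set).

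Honestly, there is no serious obstacle at this stage, because the heavy lifting is distributed across the earlier lemmas: Lemma 1 provides the frequency necessary condition, Lemma 2 upgrades it to sufficiency under MSMW, Lemma 3 enables the queue-length bound, Lemma 4 bounds the round-robin excursions, and Lemma 5 closes the stability argument via Foster's criterion. The only place where I would be slightly careful is making explicit that Lemma 5's hypothesis "for any supportable arriving rate vector" already tacitly uses the Lemma 1 condition, so that I am not circular when combining it with clause (2). Once both clauses of Definition 1 are verified for every $\lambda \in \Lambda$, we have $\Lambda \subseteq \Lambda'$, and together with the reverse inclusion this yields $\Lambda' = \Lambda$, completing the proof.
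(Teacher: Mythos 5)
Your proof is correct and takes essentially the same route as the paper, which establishes the theorem by combining Lemma 2 (applicable to every vector in $\Lambda$ via Lemma 1's necessary condition) for the service-frequency clause with Lemma 5 for positive recurrence, the reverse inclusion $\Lambda' \subseteq \Lambda$ being immediate from the definitions. Your explicit two-inclusion bookkeeping and the remark on the non-circularity of Lemma 5's hypothesis merely spell out what the paper dismisses as ``straightforward.''
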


\section{Discussion}
In this section, we continue our analysis on the performance of a collocated network with the service frequency constraints.
Our objectives are mainly on three aspects.
In part A, we consider the change of the network performance from an overall view.
We discuss the change on the set of supportable arriving rate vectors when introducing a group of fixed service frequency constraints.
We compare the size of the new capacity region with the original region without the constraints.
In part B, we aim to check the in-depth performance of the system with service frequency constraints.
We check the total queue length of the system under a rate vector that is supportable both with or without the constraints.


\subsection{Decreasing in Capacity Region}
The constraint on service frequency reduces the size of capacity region.
This is due to the fact that some links with few packets can still be scheduled to meet the constraints, and then the bandwidth is wasted.
Other links have to wait and the length of their queues could possibly be infinite.
A larger $\sum_{i=1}^{N}\frac{1}{\delta_i}$ often leads to larger decrease in capacity region since more slots will be used by the links not in $gp_0$ to keep the service frequency.
We formulate the decrease in Theorem $4$.
Lemma $6$ estimates the size of the vector space composed of all vectors $k=(k_1,k_2,\cdots,k_N)$ that follow $k_i\geq 0,\forall i$ and $\sum_{1}^{N}k_i\leq\beta$.
The conclusion in Lemma $6$ is applied for the proof of Theorem 4.

\begin{lemma}
For any vector $k =(k_1,k_2,\cdots,k_n)$, $\oint_{0}^{\beta}1d\sigma=\frac{\beta^n}{n!}$ if $k_i\geq 0,\forall i$ and $\sum_{1}^{N}k_i\leq\beta$. $\oint_{0}^{\beta}1d\sigma$ is a multiple integration on the $n$ dimensions, also defined as the size of the vector space composed of all vectors like $k$.
\end{lemma}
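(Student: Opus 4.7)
The quantity in question is the volume of the standard $n$-dimensional simplex $\{k \in \mathbb{R}^n : k_i \geq 0, \sum_{i=1}^n k_i \leq \beta\}$, so the claim is the well-known formula $\beta^n/n!$. My plan is to prove it by induction on the dimension $n$, which gives the cleanest route and matches the integral formulation stated in the lemma.

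For the base case $n=1$, the region is simply the interval $[0,\beta]$, whose length is $\beta = \beta^1/1!$, so the equality holds. For the inductive step, I assume that for dimension $n-1$ the volume of the set $\{k' : k'_i \geq 0, \sum_{i=1}^{n-1} k'_i \leq \alpha\}$ equals $\alpha^{n-1}/(n-1)!$ for every $\alpha \geq 0$. I would then slice the $n$-dimensional region along the last coordinate $k_n$: for each fixed value $k_n \in [0,\beta]$, the cross-section is the set $\{(k_1,\ldots,k_{n-1}) : k_i \geq 0, \sum_{i=1}^{n-1} k_i \leq \beta - k_n\}$, which by the inductive hypothesis has volume $(\beta - k_n)^{n-1}/(n-1)!$.

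Integrating the cross-sectional volumes via Fubini gives
\begin{equation*}
\oint_0^\beta 1\, d\sigma \;=\; \int_0^\beta \frac{(\beta - k_n)^{n-1}}{(n-1)!}\, dk_n \;=\; \frac{1}{(n-1)!}\cdot\frac{\beta^n}{n} \;=\; \frac{\beta^n}{n!},
\end{equation*}
which completes the induction.

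There is essentially no deep obstacle here; the only minor subtlety is to be careful that the notation $\oint_0^\beta 1\, d\sigma$ really denotes the $n$-fold iterated integral over the simplex as the paper defines it, so that Fubini's theorem applies directly and the slicing argument is rigorous. Once that interpretation is made explicit, the induction proceeds without further difficulty, and the resulting formula $\beta^n/n!$ can be plugged into the subsequent proof of Theorem~4.
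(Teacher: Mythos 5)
Your proof is correct and takes essentially the same route as the paper: both argue by induction on the dimension, slicing the simplex along one coordinate so the cross-section has volume $\frac{(\beta-x)^{n-1}}{(n-1)!}$ by the inductive hypothesis, and then integrating to obtain $\frac{\beta^{n}}{n!}$. If anything, your version is cleaner, since the paper's inductive step garbles its notation (writing the slice constraint in terms of unrelated symbols $r_ic_i$), whereas you state the slicing and the appeal to Fubini explicitly.
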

\begin{proof}
We prove the lemma by induction.

When $n=1$, $\oint_{0}^{\beta}1d\sigma=\beta$. When $n = 2$, $\oint_{0}^{\beta}1d\sigma=\frac{\beta^2}{2}$.

Assume  $\oint_{0}^{\beta}1d\sigma=\frac{\beta^k}{k!}$ is true when $n=k$.
When $n = k+1$,
\begin{equation}\begin{split}
 \oint_{0}^{\beta}1d\sigma & =\int_{0}^{\beta}\oint_{\beta}^{0}1d\sigma'dx
\end{split}\end{equation}
where $\sigma'$ is a $k$-dimensional space with $\sum_{i=1}^{k}(r_ic_i)\leq\beta-r_{k+1}c_{k+1}$, and
\begin{equation}\begin{split}
 \int_{0}^{\beta}\oint_{\beta}^{0}1d\sigma'dx & =\int_{0}^{\beta}\frac{(\beta-x)^n}{n!}dx=-\int_{\beta}^{0}\frac{(\beta-x)^n}{n!}d(\beta-x)\\
 &=-\int_{\beta}^{0}\frac{(y)^n}{n!}d(y)=-\frac{1}{(n+1)!}(y)^{n+1}|_\beta^0\\
 &=\frac{\beta^{n+1}}{(n+1)!}
\end{split}\end{equation}
\end{proof}

Theorem $4$ gives the ratio between the new capacity region and the original one without service frequency constraints.
\begin{theorem}
The size of capacity region $\Lambda$ of a network is no more than $N!\sum_{t=0}^{N}C_N^t[\frac{\epsilon^t}{t!}(\frac{1}{\delta_{min}})^{N-t}]$ of its original capacity region $\Lambda_0$, where $\epsilon = 1-\sum_{1}^{N}\frac{1}{\delta_i}$, and $\delta_{min}$ is the minimum service frequency constraint among all the links.
\end{theorem}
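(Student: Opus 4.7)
My plan is to compute the volumes of the two capacity regions in a common coordinate system, then take their ratio. Let me change variables by setting $x_i = \lambda_i/(r_i c_i)$ for each link. The Jacobian $\prod_i r_i c_i$ is the same for both regions, so it will cancel when we compute the ratio, and I can work entirely with the $x_i$.

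First, I would characterize $\Lambda_0$. In the collocated network, without service-frequency constraints, the rate vector is supportable if and only if $\sum_{i=1}^{N} x_i \leq 1$ with $x_i \geq 0$. By Lemma 6 (with $\beta = 1$, $n = N$), the volume of this region in $x$-space is exactly $\frac{1}{N!}$.

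Next, I would characterize $\Lambda$. By Lemma 3, any supportable vector under the service-frequency constraints must satisfy
\[
\sum_{i=1}^{N} \max\Bigl\{x_i - \tfrac{1}{\delta_i},\, 0\Bigr\} \;\leq\; \epsilon, \qquad x_i \geq 0.
\]
So $\Lambda$ is contained in the set cut out by these inequalities, and it suffices to upper-bound the volume of this larger set. The key idea is to partition this set by the subset $T \subseteq \{1,\ldots,N\}$ of indices where $x_i > 1/\delta_i$. On the piece corresponding to $T$ with $|T| = t$, the coordinates $x_i$ for $i \notin T$ range independently in $[0, 1/\delta_i]$ and contribute no constraint from the sum, while the coordinates $x_i$ for $i \in T$, after the substitution $y_i = x_i - 1/\delta_i$, range over the simplex $\{y_i \geq 0,\ \sum_{i \in T} y_i \leq \epsilon\}$.

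For each such piece, the volume factorizes into a product of a box volume and a simplex volume. Applying Lemma 6 to the simplex part gives $\frac{\epsilon^t}{t!}$, and bounding the box part by $\prod_{i \notin T}(1/\delta_i) \leq (1/\delta_{min})^{N-t}$ (since $\delta_i \geq \delta_{min}$ for every $i$) yields a piece-wise bound of $\frac{\epsilon^t}{t!}(1/\delta_{min})^{N-t}$. Summing over all $\binom{N}{t}$ subsets of size $t$ and then over $t$ gives
\[
\mathrm{vol}(\Lambda) \;\leq\; \sum_{t=0}^{N} \binom{N}{t} \frac{\epsilon^t}{t!}\Bigl(\tfrac{1}{\delta_{min}}\Bigr)^{N-t}.
\]
Dividing by $\mathrm{vol}(\Lambda_0) = 1/N!$ gives the claimed ratio $N!\sum_{t=0}^{N}\binom{N}{t}\bigl[\frac{\epsilon^t}{t!}(1/\delta_{min})^{N-t}\bigr]$.

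The only mildly delicate step is the partitioning argument: one has to recognize that the $N$ absolute-value-style terms in $\sum_i \max\{x_i - 1/\delta_i, 0\}$ split the region into $2^N$ polytopal pieces indexed by subsets $T$, and that on each piece the constraint decouples into a product of a coordinate box (for inactive coordinates $i \notin T$) and a shifted simplex (for active coordinates $i \in T$) to which Lemma 6 applies directly. Once that decomposition is in place, the rest is a straightforward application of the earlier lemma and the crude bound $\delta_i \geq \delta_{min}$.
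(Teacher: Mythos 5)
Your proposal is correct and takes essentially the same route as the paper's proof: both contain $\Lambda$ in the set cut out by the Lemma~3 condition $\sum_i \max\{x_i - \frac{1}{\delta_i},0\} \leq \epsilon$, partition that set by the subset of links with $x_i \geq \frac{1}{\delta_i}$, factor each piece into a coordinate box times a shifted simplex of volume $\frac{\epsilon^t}{t!}$ via Lemma~6, bound the box by $(\frac{1}{\delta_{min}})^{N-t}$, and divide by $\mathrm{vol}(\Lambda_0)=\frac{1}{N!}$. Your write-up is, if anything, slightly more careful than the paper's in making explicit the Jacobian cancellation under $x_i = \lambda_i/(r_i c_i)$ and the decomposition into $2^N$ polytopal pieces indexed by subsets.
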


\begin{proof}
Based on Lemma $3$, $\sum_{i=1}^{N}Max\{\frac{\lambda_i}{r_ic_i}-\frac{1}{\delta_i},0\}< 1-\sum_{1}^{N}\frac{1}{\delta_i}$ holds for any rate vector inside the capacity region. We denote $\frac{\lambda_i}{r_ic_i}$ as $k_i$, and rewrite the inequality as follows:
\begin{equation}
\sum_{i=1}^{N}Max\{k_i-\frac{1}{\delta_i},0\}<\epsilon
\end{equation}

We will first discuss the size of the vector spaces when a fixed proportion of links have $k_i$s larger then each $\delta_i$.
Then we can achieve the size of the capacity region by adding up all the vector spaces.	
Assume there are $t$ users whose $k_i \geq\frac{1}{\delta_i}$. $t$ changes from 0 to $N$, and we define $\sigma_t$ as the size of the corresponding vector space.
\begin{equation}
\begin{split}
\sigma_t&=\sum_{i=1}^{|C_t|}[\oint_{0}^{\epsilon}1d\sigma_t'\prod_{j\notin C_{ti}}(\frac{1}{\delta_j})]\\
\end{split}
\end{equation}

In (11), $C_t=\{C_{t1},C_{t2},\cdots,C_{t|C_t|}\}$, where $C_{ti}$ is the set of arbitrary $t$ links in a network.
There are in total $|C_t|$ different sets.
It is easy to see that $|C_t|=C_N^t$,
where $C_N^t$ is the combinatorial number of picking $t$ out of $N$.
$\oint_{0}^{\epsilon}1d\sigma_t'$ is the size of the space composed by the vector $\{k_1-\frac{1}{\delta_1},k_2-\frac{1}{\delta_2},\cdots,k_t-\frac{1}{\delta_t}\}$, where $\sum_{i=0}^{t}(k_i-\frac{1}{\delta_i})\leq\epsilon$, and $k_i\geq\frac{1}{\delta_i}$ for $i =1,\cdots,t$.
It means there are $t$ links whose $k_i\geq\delta_i$.
Finally, $\prod_{j\notin C_{ti}}(\frac{1}{\delta_j})$ is the vector space of the users whose $k_i<\frac{1}{\delta_i}$.

According to Lemma 4, $\oint_{0}^{\epsilon}1d\sigma_t'=\frac{\epsilon^t}{t!}$. Add up all the $\sigma_t$ from $t=0$ to $t = N$,
\begin{equation}
\begin{split}
\sigma&=\sum_{t=0}^{N}\sigma_t=\sum_{t=0}^{N}\sum_{i=1}^{|C_t|}[\oint_{0}^{\epsilon}1d\sigma_t'\prod_{j\notin C_{ti}}(\frac{1}{\delta_j})]\\
&=\sum_{t=0}^{N}\sum_{i=1}^{|C_t|}[\frac{\epsilon^t}{t!}\prod_{j\notin C_{ti}}(\frac{1}{\delta_j})]\leq\sum_{t=0}^{N}\sum_{i=1}^{|C_t|}[\frac{\epsilon^t}{t!}(\frac{1}{\delta_{min}})^{N-t}]\\
&\leq \sum_{t=0}^{N}C_N^t[\frac{\epsilon^t}{t!}(\frac{1}{\delta_{min}})^{N-t}].
\end{split}
\end{equation}

The size of the original capacity region is $\Lambda_0=\frac{1}{N!}$, which is derived by applying $\sum_{i=1}^{N}k_i\leq 1$ to Lemma 4.
Thus, $\Lambda/\Lambda_0\leq N!\sum_{t=0}^{N}C_N^t[\frac{\epsilon^t}{t!}(\frac{1}{\delta_{min}})^{N-t}]$.
\end{proof}

Theorem 4 shows that the scale of the capacity region keeps increasing as $\epsilon$ becomes larger.
When $\epsilon\rightarrow 1$, we have $\sum_{1}^{N}\frac{1}{\delta_i}\rightarrow 0$, then $\sum_{t=0}^{N}\sum_{i=1}^{|C_t|}[\frac{\epsilon^t}{t!}\prod_{j\notin C_{ti}}(\frac{1}{\delta_j})]\rightarrow \frac{1^N}{N!}$.
It means that $\Lambda/\Lambda_0=1$ when the service frequency constraints are sufficiently loose for all the links.
On the other hand, when $\epsilon\rightarrow 0$, we have $\sum_{1}^{N}\frac{1}{\delta_i}\rightarrow 1$,
then $\sum_{t=0}^{N}\sum_{i=1}^{|C_t|}[\frac{\epsilon^t}{t!}\prod_{j\notin C_{ti}}(\frac{1}{\delta_j})]=\prod_{i=1}^{N}(\frac{1}{\delta_i})$.
$\prod_{i=1}^{N}(\frac{1}{\delta_i})$ is maximized when all the $\delta_i$ are identical.
It indicates that with tough service frequency requirements, the capacity region is maximized when all the links have same constraints $\delta = N$, which refers to a strict round-robin scheduling.

\subsection{Performance on Queue Length}
The queue length of a system under an arriving rate vector usually increases with the existence of service frequency constraints.
This is aroused by the scheduling of the links with very small queue length.
These links cannot take full use of the channel, which will lead to the increasing of the global queue length.
We estimate an upper bound on the size of the total queue length of the links under fixed $\{\lambda_1,\lambda_2,\cdots,\lambda_N\}$ and $\{\delta_1,\delta_2,\cdots,\delta_N\}$.

\begin{theorem}
For an arriving rate vector $\lambda=\{\lambda_1, \lambda_2,\cdots, \lambda_N\}$, and service frequency constraints $\{\delta_1,\delta_2,\cdots,\delta_N\}$, the expectation of the average total queue length in a long time follows $\lim_{t\rightarrow \infty}\frac{\sum_{t'=1}^{t}\sum_{i=1}^{N}Q_i[t']}{t}\leq B_3$ if the system employs MSMW.
Assume $\epsilon=1-\sum_{i=1}^{N}\frac{1}{\delta_i}$.
Then $B_3 = \frac{N+1}{2}+\frac{N'}{8}+\frac{1}{2\epsilon^2}\sum_{l_i\in L'}\frac{Var(A_i[t])}{(r_ic_i)^2}$ when $\epsilon>0$, where $L'$ is the set of links with $\frac{\lambda_j}{r_jc_j}-\frac{1}{\delta_j}>0$, $N'$ is the size of $L'$.
$B_3=\sum_{i=1}^{N}(Min\{\frac{\lambda_i}{r_ic_i},\frac{1}{\delta_i}\}*\frac{(1+\delta_i)}{2})$ when $\epsilon=0$.
\end{theorem}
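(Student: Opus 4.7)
The plan is to handle the two regimes $\epsilon=0$ and $\epsilon>0$ separately, because the structure of the MSMW schedule differs fundamentally between them. When $\epsilon=0$ the schedule is forced to be strictly periodic, so the bound follows from a direct steady-state computation on each link. When $\epsilon>0$ genuine scheduling slack exists, and one must combine a per-link round-robin estimate with a Lyapunov-drift argument on the residual max-weight slots.

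\emph{Case $\epsilon=0$.} By Lemma 1, $\sum_i 1/\delta_i=1$, and by Lemma 3, stability forces $\rho_i:=\lambda_i/(r_i c_i)<1/\delta_i$ for every link, so $\min\{\rho_i,1/\delta_i\}=\rho_i$. A counting argument in the spirit of Lemma 2 shows that each link is served \emph{exactly} once in every window of $\delta_i$ consecutive slots: the mandatory services already saturate the $T$-slot budget, leaving no surplus. Under this strictly periodic schedule the recursion in (1) gives $E[Q_i[t]]=k\rho_i$ at the $k$-th slot following the last service, so averaging over a period yields $\bar Q_i=\rho_i(\delta_i+1)/2$. Summing over $i$ reproduces the claimed $B_3$.

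\emph{Case $\epsilon>0$.} Over a window of $T_0$ slots, $T_0/\delta_i$ slots are used to honour link $i$'s frequency constraint from some non-zero group, while the remaining $\epsilon T_0$ slots are distributed by max-weight among the links in $gp_0$. The first kind of service already matches the round-robin analysis of Case 1, and summing the per-link bounds $\min\{\rho_i,1/\delta_i\}(\delta_i+1)/2\le (1+1/\delta_i)/2$ over $i$ gives at most $(N+1)/2$ after invoking $\sum_i 1/\delta_i\le 1$. For $l_i\in L'$ the residual arrival rate $\rho_i-1/\delta_i$ must be absorbed on the $\epsilon$-fraction of max-weight slots. Apply the Lyapunov function $V(Q)=\frac12\sum_{l_i\in L'}Q_i^2$ restricted to this subsystem and expand $\Delta V$ exactly as in (8). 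The cross term $\sum_{l_i\in L'}Q_i(A_i/(r_i c_i)-s_i[t])$ is handled by Theorem 1 with effective capacity slack $\epsilon$ and contributes $\frac{1}{2\epsilon^2}\sum_{l_i\in L'}\mathrm{Var}(A_i[t])/(r_i c_i)^2$; the quadratic term $\frac12\sum_{l_i\in L'}E[(A_i/(r_i c_i)-s_i[t])^2]$ contributes the constant $N'/8$ via the standard bound $E[s_i](1-E[s_i])\le 1/4$. Adding the three pieces yields $B_3$, and the standard telescoping of the drift inequality across $t'=1,\dots,t$ converts it into the long-run time average claimed in the theorem.

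\emph{Main obstacle.} The delicate step is justifying the decomposition of MSMW's service into a ``mandatory round-robin'' part and an ``excess max-weight'' part. MSMW chooses greedily and interleaves the two modes in a queue-dependent way, so one must show that after time-averaging the modes decouple into two independent bounds that add cleanly; this requires arguing that the Lyapunov drift on the max-weight subsystem can be carried out using only the residual rates $\rho_i-1/\delta_i$ on $L'$, rather than the original rates on all links, which is what justifies the $1/\epsilon^2$ factor in front of the variance sum.
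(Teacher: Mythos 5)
There is a genuine gap, and it sits exactly where you flag it. Your overall architecture matches the paper's: a per-link sawtooth average for the mandatory services, bounded by $\min\{\rho_i,1/\delta_i\}\cdot\frac{1+\delta_i}{2}\le\frac{1}{2}(1+\frac{1}{\delta_i})$ (writing $\rho_i=\lambda_i/(r_ic_i)$) and summed to $\frac{N+1}{2}$ using $\sum_i 1/\delta_i\le 1$; a residual max-weight bound on $L'$ with rescaled rates yielding the $\frac{1}{2\epsilon^2}\sum_{l_i\in L'}\mathrm{Var}(A_i[t])/(r_ic_i)^2$ and $N'/8$ terms; and a direct periodic computation when $\epsilon=0$. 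But the decoupling you defer to the ``main obstacle'' paragraph is not a technical nicety to be checked at the end --- it is the entire content of the paper's proof, and you never supply it. The paper resolves it by an explicit coupling construction: an auxiliary policy 2-MSMW that splits each arrival stream, routing the first $r_ic_i/\delta_i$ packets per slot into a queue $Q_{i1}$ served only in non-$gp_0$ slots and the overflow into $Q_{i2}$ served only in $gp_0$ slots by max-weight computed on $Q_{i2}$ alone. Since MSMW weights links by the total $Q_{i1}+Q_{i2}$ and $\max\{Q_{i1}+Q_{i2}\}\ge\max\{Q_{i1},Q_{i2}\}$, MSMW departs at least as many packets per slot, giving the domination $E(\sum_i Q_i[t])\le E(\sum_i Q'_i[t])$; the two auxiliary queues then decouple \emph{by construction}, with $Q_{i1}$ evolving as the exact periodic sawtooth and $Q_{i2}$ forming a standard max-weight system whose normalized rates $(\rho_i-1/\delta_i)/\epsilon$ sum to less than $1$ by Lemma 3, so that the Gupta--Shroff bound (the paper's Theorem 2) applies directly. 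Without some such construction, your drift argument on the $L'$ subsystem does not go through as stated: in $gp_0$ slots MSMW selects by the \emph{full} queue length, so the subsystem restricted to $L'$ is not actually served max-weight with respect to its residual backlog, and the claim that the Lyapunov analysis ``can be carried out using only the residual rates'' is precisely what needs proof.

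Two smaller bookkeeping slips are worth fixing as well. First, the paper's $N'/8$ does not come from $E[s_i](1-E[s_i])\le 1/4$; it comes from maximizing $\epsilon x-x^2\le\epsilon^2/4$ with $x=\rho_i-1/\delta_i\in(0,\epsilon)$ inside the constant $B_1$, after the rescaling by $\epsilon$ --- your version would also leave the arrival variance inside the quadratic drift term, double-counting it against the cross term you assign elsewhere. Second, you invoke ``Theorem 1'' for the variance contribution, but Theorem 1 in this paper is Foster's criterion, not a queue-length bound; the relevant statement is Theorem 2. These are repairable, but the missing coupling/domination construction is a real gap: your proposal correctly reconstructs the shape of $B_3$ in both regimes while leaving unproven the one step that makes the decomposition legitimate.
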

\begin{proof}
See Appendix B.
\end{proof}

As we see, the queue length is affected by many factors, like the number of the total links, the sum of $\frac{1}{\delta_i}$'s and so on.
To derive the accurate influence of these factors will be one of our future works.


\section{Evaluation}
In this section, we evaluate MSMW through a group of simulations.
We mainly compare MSMW with a state-of-the-art scheduling policy \textit{Regulated Throughput Optimal} (\textit{RTO}) \cite{13interservicetimeli2013throughput}.
RTO is also a combination of the two fundamental scheduling policies: max weight scheduling and round-robin scheduling.
RTO serves the link with the maximum weight in each slot.
The weight of each link is a combination of queue length $Q_i[t]$ and the time-since-last-service $T_i[t]$.
The time-since-last-service is the number of time slots that start from the link's last transmitting.
There are two weighting factors, $\alpha_i$ and $\beta_i$, for the two components in the weight.
We set $\alpha_i = \frac{1}{r_ic_i}$ and $\beta_i = \frac{1}{\delta_i}$ for each link, \emph{i.e.}, $w_i[t]=\frac{1}{r_ic_i}Q_i[t]+\frac{1}{\delta_i}T_i[t]$.
Since RTO only applies to collocated networks, we detailedly compare MSMW with RTO in this type of network.
Furthermore, we also give the performance of MSMW in general networks.
In every group of simulations, we run each scheduling policy ten times, with $10000$ slots each time.

Our evaluation of MSMW mainly focuses on two metrics:
the performance on service frequency constraints and the total queue length of the system.
As proved, MSMW can keep the service frequency constraints for all the links.
We are also interested in the performance of RTO on the service frequency, as it can regulate the services among links.
Meanwhile, we check the queue length since it can tell the stability of the system.
MSMW ensures service frequency at the cost of increasing total queue length.
Such a cost is tolerable in our simulation results.

\subsection{Number of Links}
The first group of simulations test MSMW and RTO with different numbers of links, which are $4$, $8$, $16$, $32$, and $64$.
The service frequency constraint of each link is $N+1$, where $N$ is the number of the links.
We set $r_i = 2$ and $c_i=0.5$ for link $i$ ($i=1,2,\cdots,N$)
and $E(A_i)=\frac{1}{N*i}$ for the $i$th link.
The arriving rates are asymptotically decreasing.
The result is shown in Fig.2.

As can be seen, MSMW can satisfy the service frequency constraints for all the links, while the ratio of the links that satisfy the constraints in RTO decreases as $N$ increases.
It falls from $1$ ($4$ out of $4$ links) to $0.109375$ (only $7$ out of $64$ links).
It means that more and more links have slowly increasing weights in RTO due to their small arriving rates.
Their weights are not large enough to always be selected by the RTO within the service frequency constraints.
The performance of MSMW is not affected since it is independent of the arriving rates.

At the same time, we find that the total queue length and maximum queue length of MSMW are comparable with those of RTO.
It is only a bit larger (less than $0.2$ for the total queue length and $0.3$ for the maximum queue length).
The increasing is mainly caused by the scheduling towards the links with very small queue length, which potentially wastes bandwidth.
The queue lengths are equal when $N=4$, which means the two policies have identical performance.

\begin{figure}[htbp]
\centering
\subfloat[Ratio of Links Achieving Service Frequency]{
\label{fig:improved_subfig_a}

\centering
\includegraphics[scale=0.16]{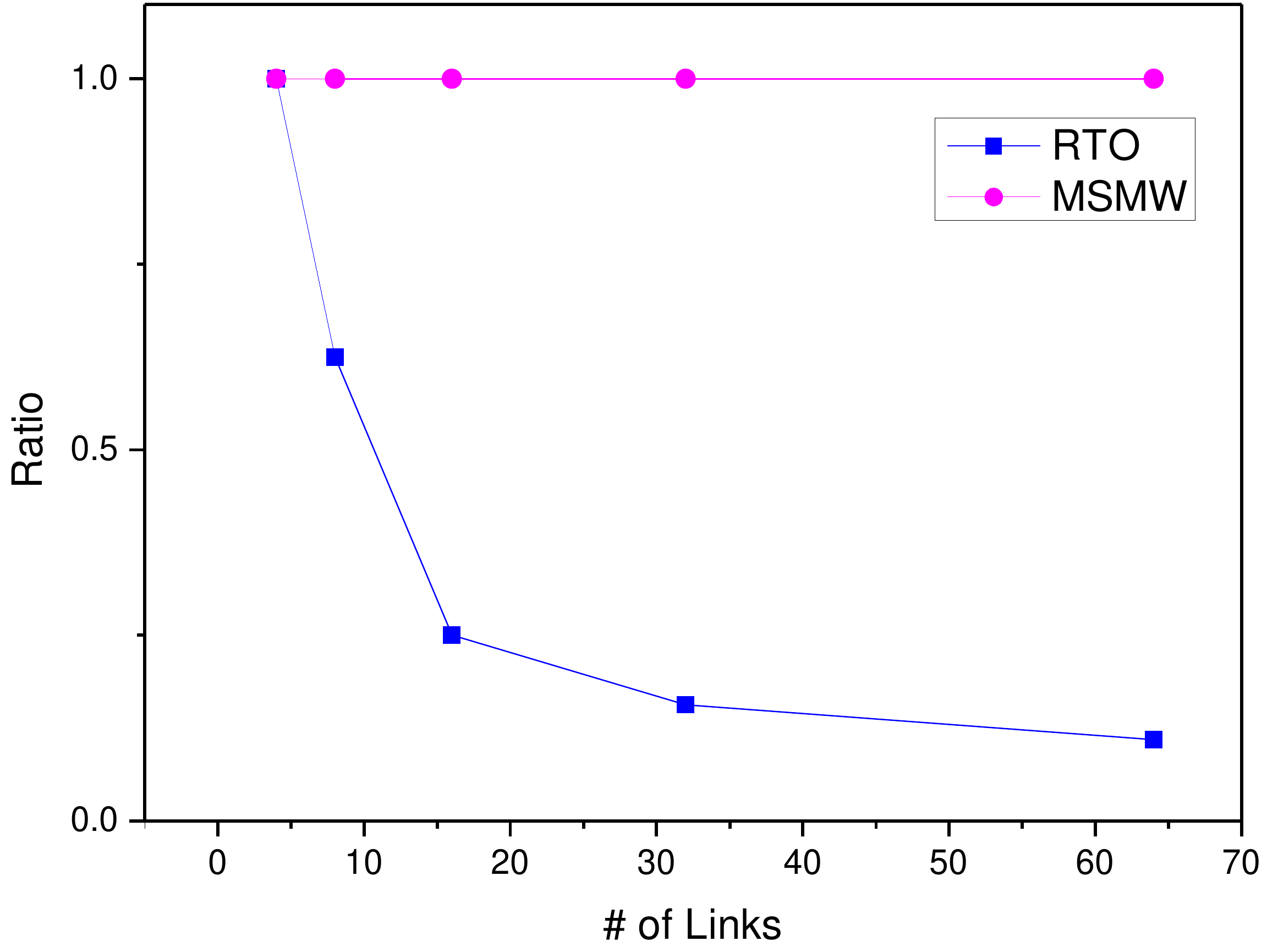}

}
\subfloat[Queue Length]{
\label{fig:improved_subfig_b}
\centering
\includegraphics[scale=0.16]{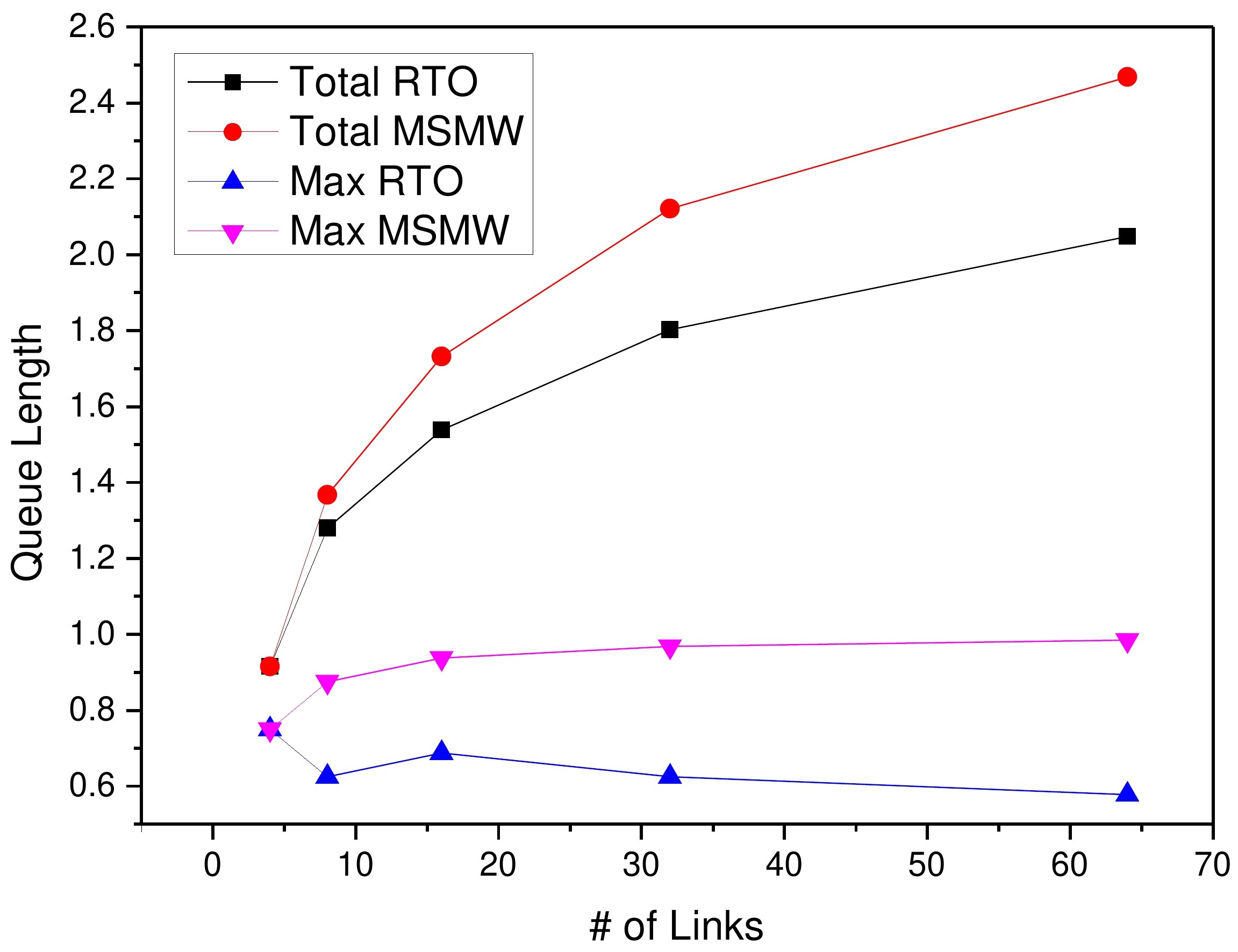}

}
\caption{System performance with different numbers of links}
\end{figure}

\subsection{Service Frequency Constraints}
The second group of simulations evaluates the performance of MSMW with different service frequency constraints.
There are $32$ links in the network, with $E(A_i)=\frac{1}{N*i}$ for link $i$.
We set the service frequency constraint to be $N+k$, and $k = 1,2,3,4,5$.
The result is shown in Fig.3.

We find that the performance of RTO rapidly improves as $k$ increases, while MSMW always satisfies all the constraints.
The reason for the improvement in RTO is that the selected $\alpha_i$'s and $\beta_i$'s happen to be more fitful for larger $k$'s.
It is true that RTO can keep the service frequency if we carefully set the parameters according to the fixed arriving rates and service frequency constraints.
The queue length of MSMW is still comparable with that of RTO, as shown in Fig.3.

\begin{figure}[htbp]
\centering
\subfloat[Ratio of Links Achieving Service Frequency]{
\label{fig:improved_subfig_a}

\centering
\includegraphics[scale=0.16]{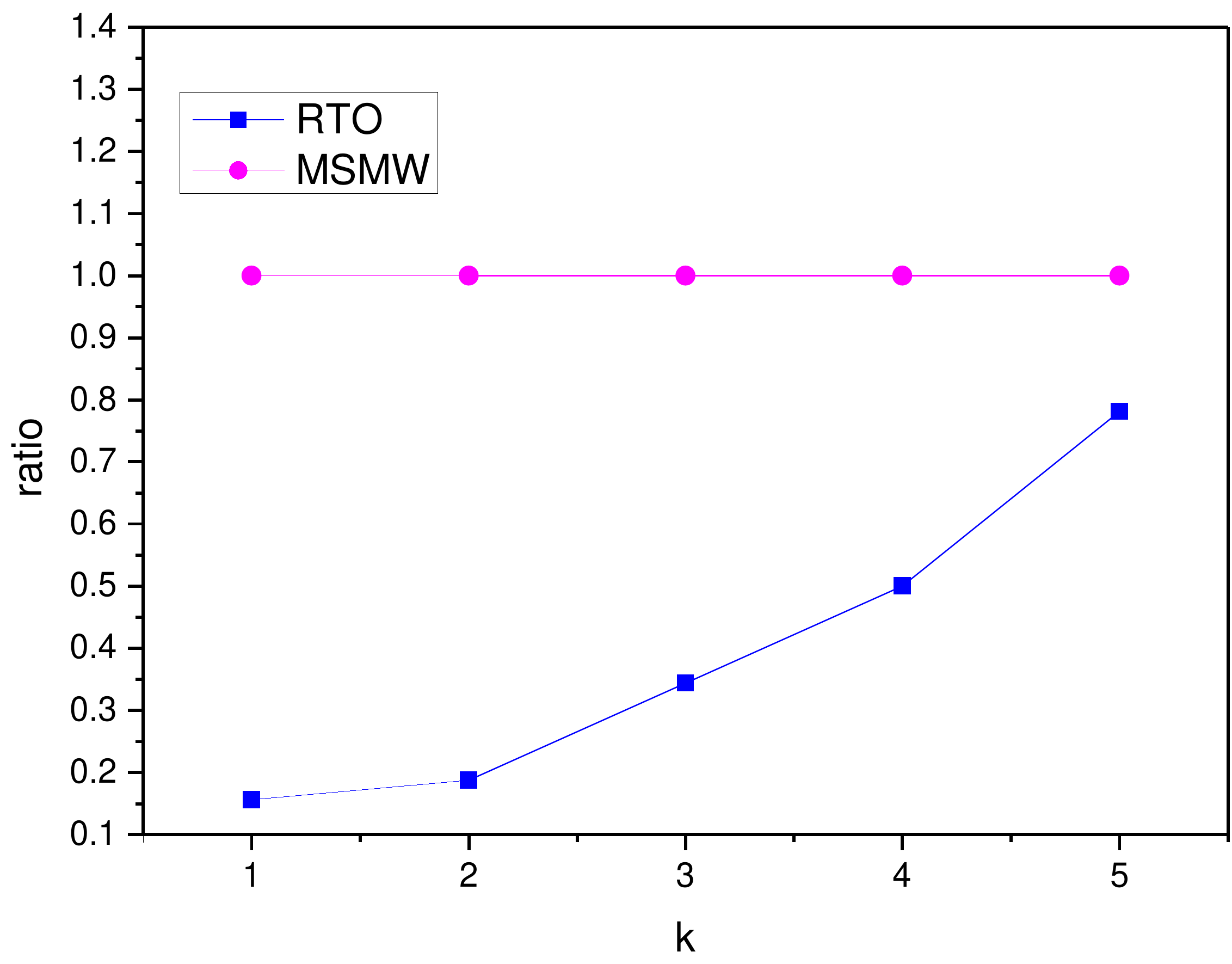}

}
\subfloat[Queue Length]{
\label{fig:improved_subfig_b}
\centering
\includegraphics[scale=0.16]{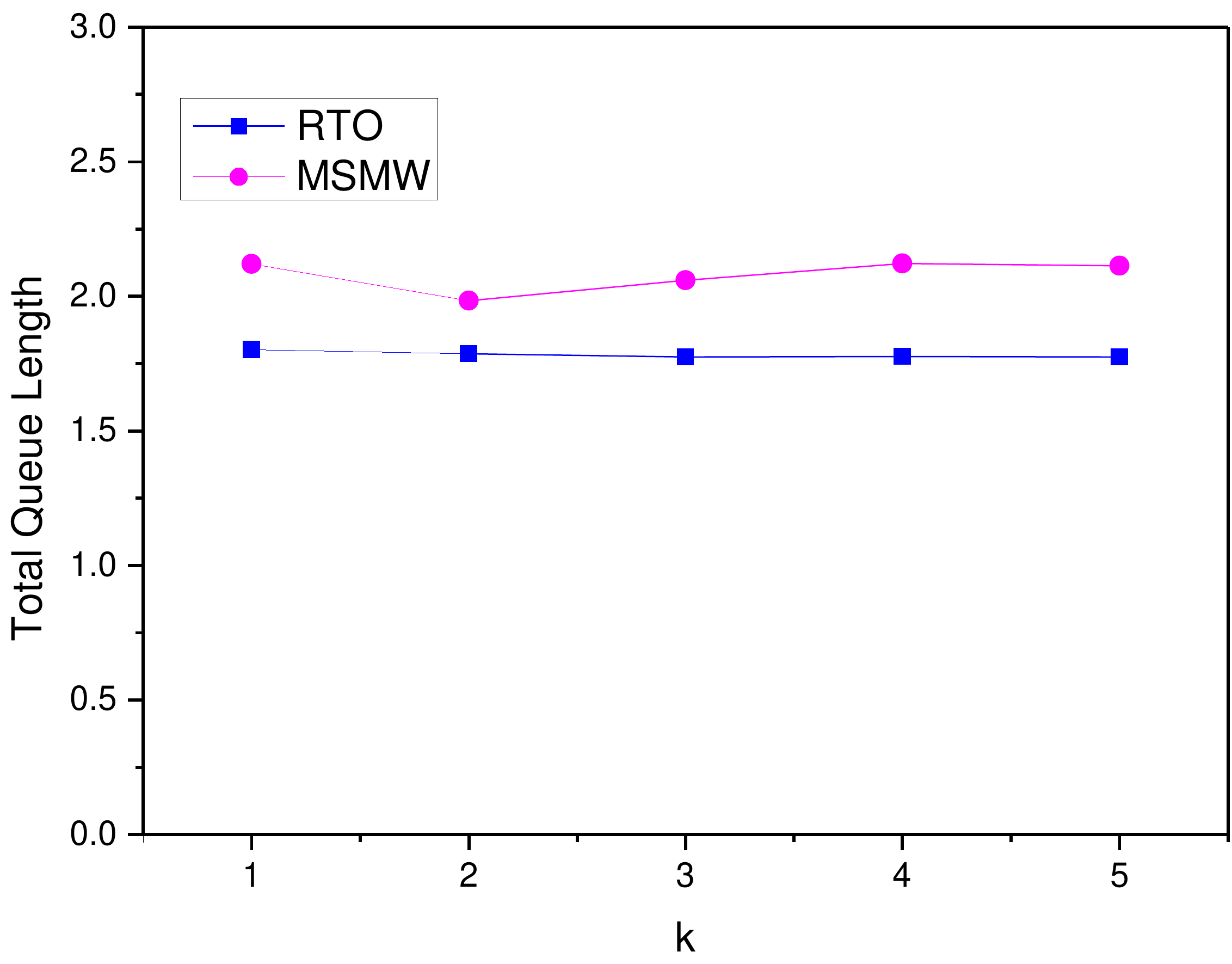}

}
\caption{System performance with various service frequency constraints}
\end{figure}

\subsection{Arriving Rates}
In the third group of simulations, we evaluate the performance with different arriving rate vectors.
There are in total $16$ links in the network, and the service frequency constraint is $N+5$ for all the links.
We set the arriving rates of the first 8 links to be $E(A_i)=\frac{1}{N*i}$, and the last 8 links to be $E(A_i)=\frac{1}{N*(i+k)}$ ($k = 1,2,\cdots,20$).
We call this rate vector two tails since the two parts of the links have different scales of arriving rates.
The difference between the two parts becomes larger as $k$ increases.
We show the results in Fig.4.

MSMW can still satisfy the constraints in all the situations, while RTO's performance degrades as $k$ increases.
RTO can only ensure service frequency for the first part when $k \ge 17$.
The queue length of MSMW is larger than that of RTO by a small fraction (around 10\%).

\begin{figure}[htbp]
\centering
\subfloat[Ratio of Links Achieving Service Frequency]{
\label{fig:improved_subfig_a}

\centering
\includegraphics[scale=0.16]{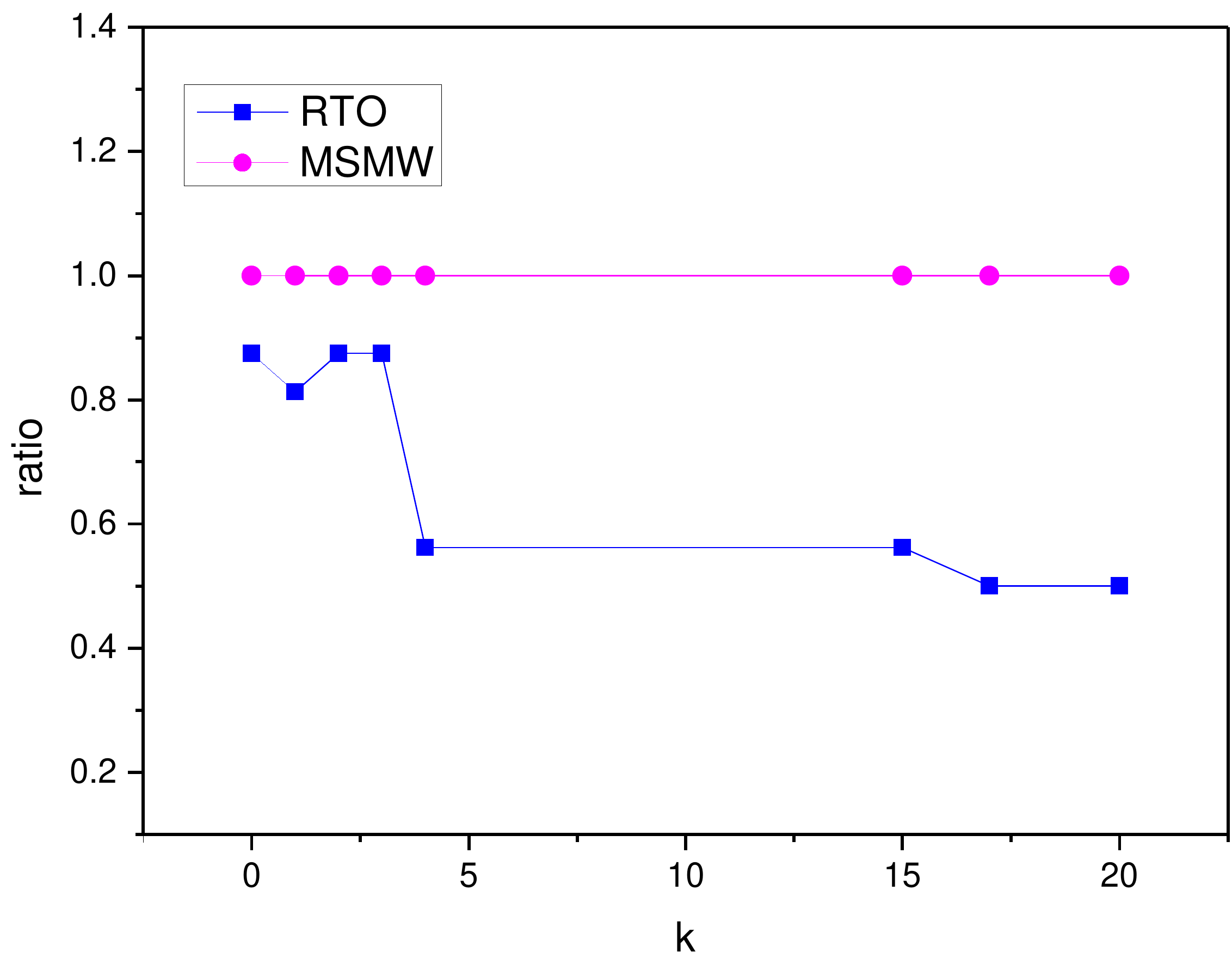}

}
\subfloat[Queue Length]{
\label{fig:improved_subfig_b}
\centering
\includegraphics[scale=0.16]{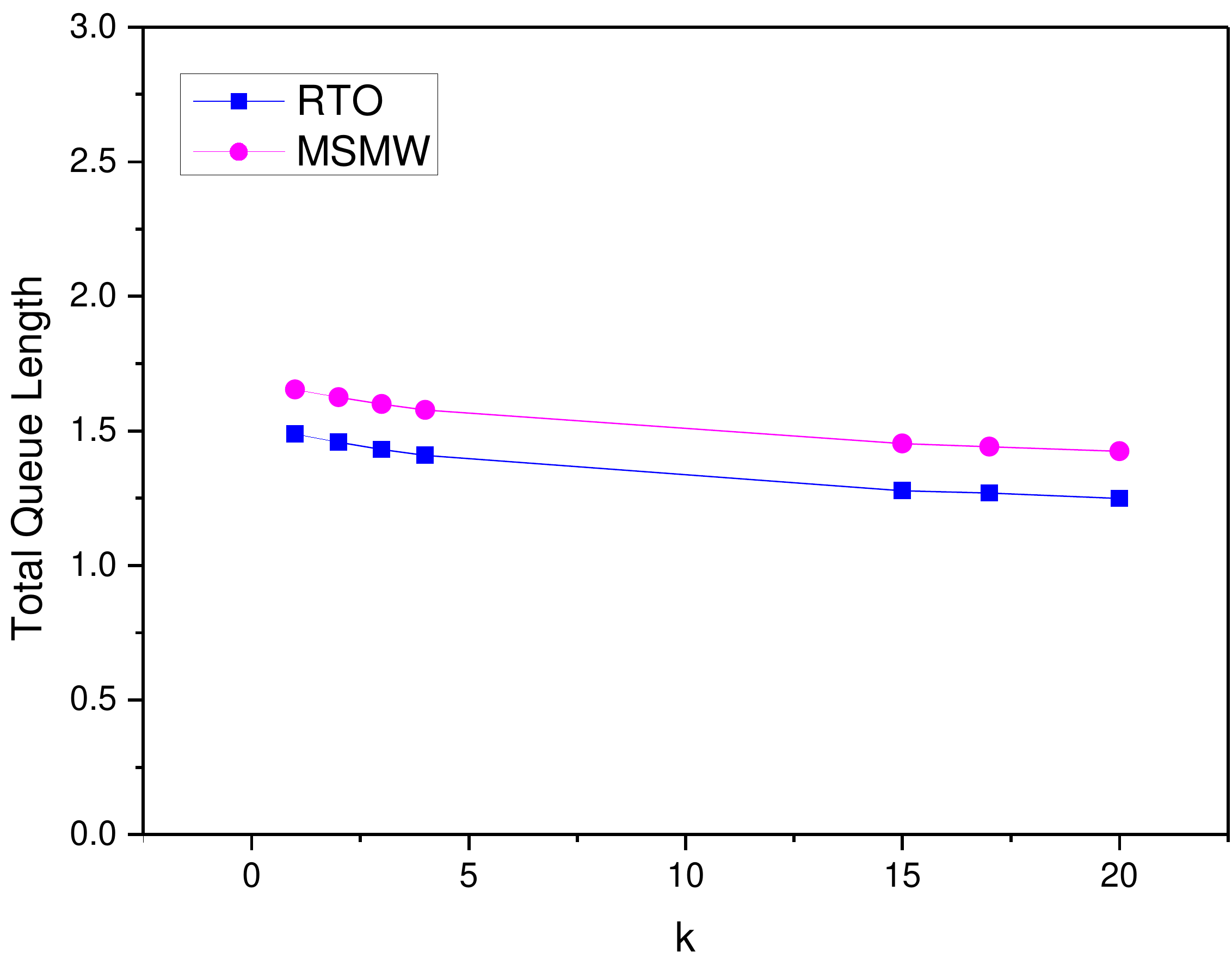}

}
\caption{System performance with various arriving rates}
\end{figure}

\subsection{Service Regularity}
For service regularity, we use the standard deviation of the inter-service time as our measurement.
The inter-service time is the duration between two consecutive transmissions of a link.
A smaller standard deviation of inter-service time indicates a link will approximately wait for a same time between every two consecutive services.
It means that the link has a good performance on service regularity.

There are $16$ links in our simulations, with service frequency constraint $N+5$.
The arriving rates are set as follows: $E(A_i)=\frac{1}{N*(i+k)}$ ($k = 1,2,\cdots,20$) for link $i$.
The result is shown in Fig.5.

As can be seen, while the maximum standard deviations of MSMW and RTO are nearly identical, the average standard deviation of MSMW is much better.
It is only about $\frac{1}{3}$ of that of RTO.
MSMW has a better performance on the aspect of service regularity.
Such an improvement is because of the fact that the service frequency constraints are mainly ensured by $ST_i$.
The evolving of $ST_i$ is regular, which leads to a good performance on service regularity.
The standard deviation equals $0$ when $k=20$, which means all the links in MSMW are scheduled in cycles.
This is achieved by the underlying evolving of $ST_i$s and $w_i$s.

\begin{figure}[htbp]
\centering
\subfloat[Max Standard Deviation]{
\label{fig:improved_subfig_a}

\centering
\includegraphics[scale=0.16]{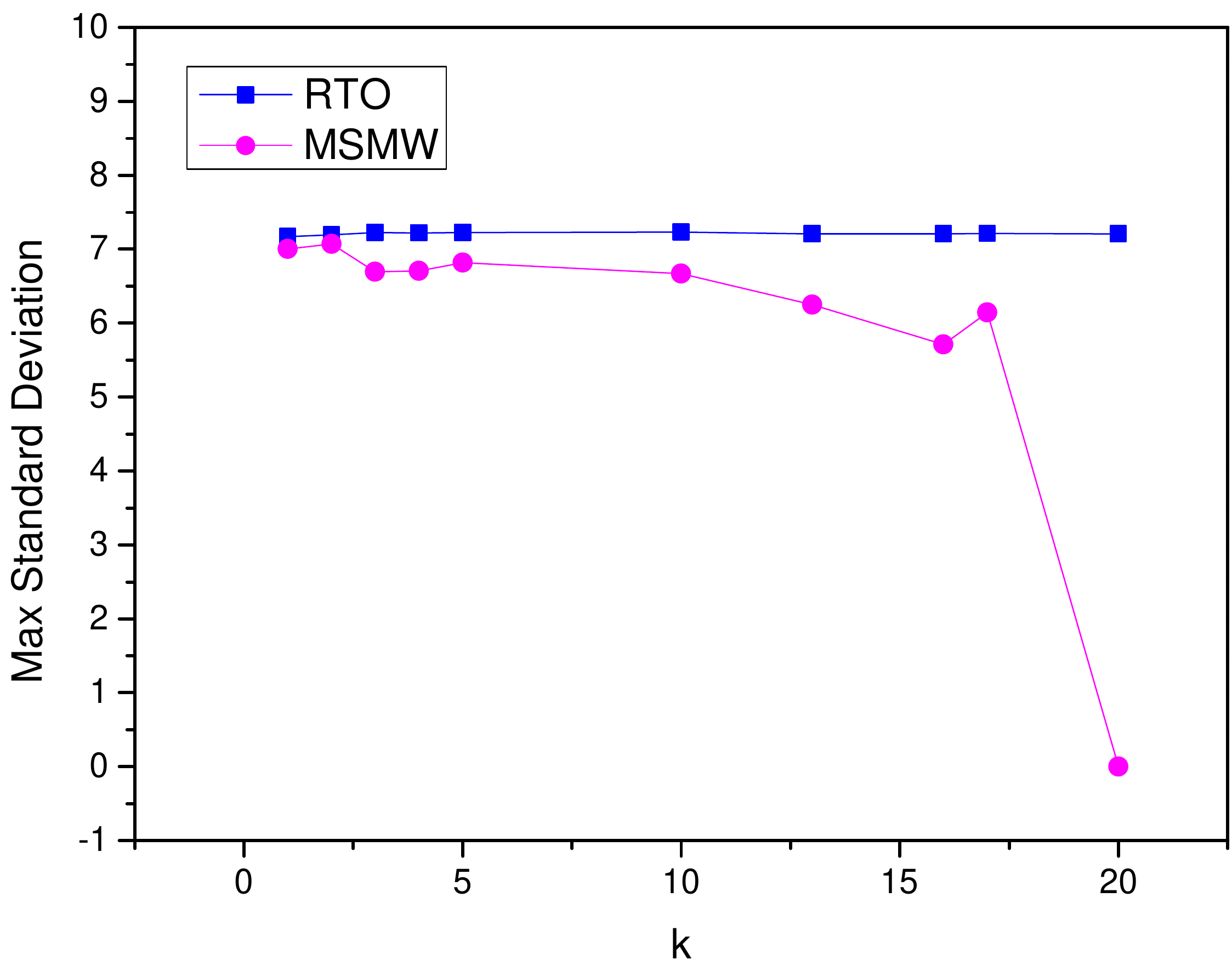}

}
\subfloat[Average Standard Deviation]{
\label{fig:improved_subfig_b}
\centering
\includegraphics[scale=0.16]{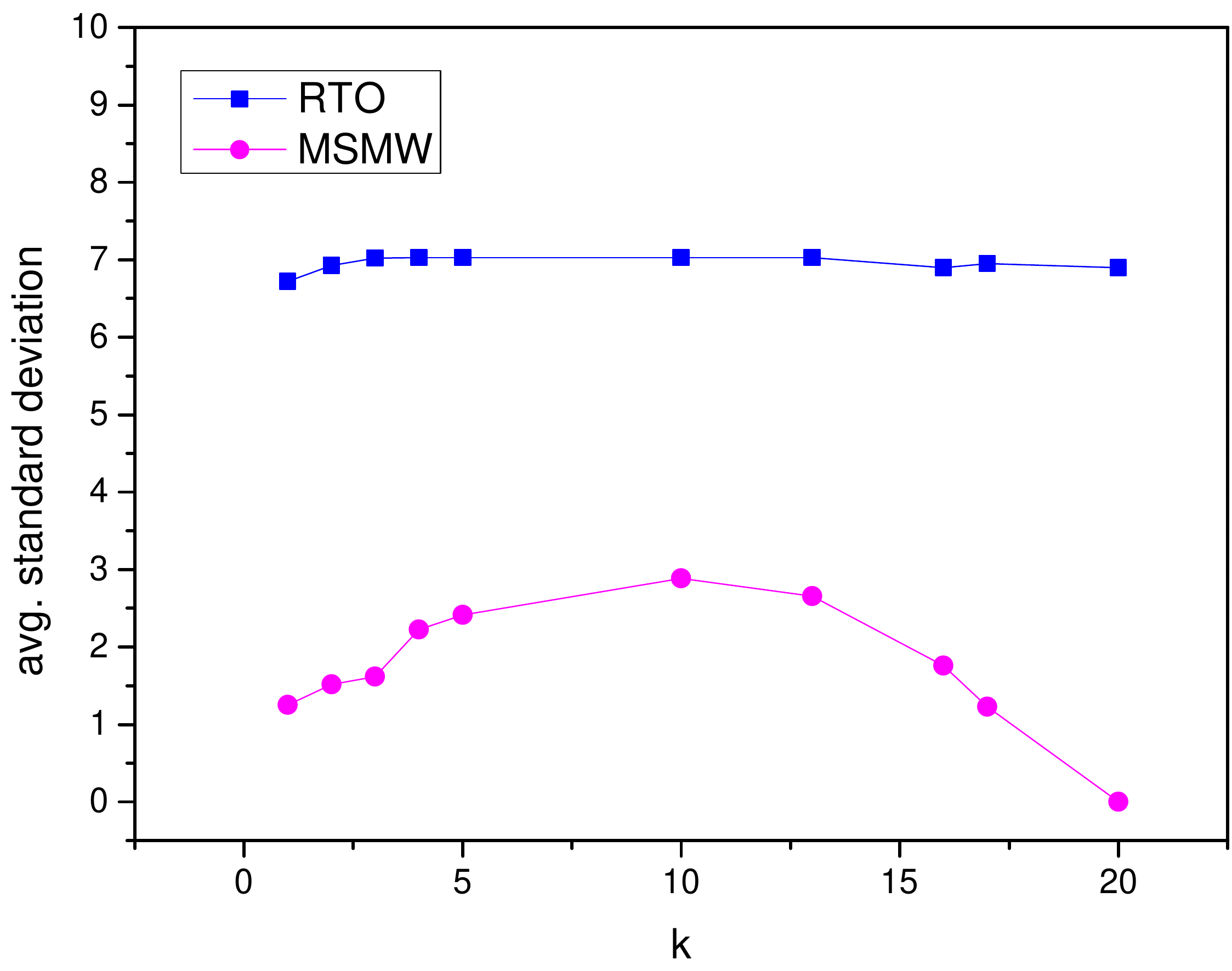}

}
\caption{Performance on service regularity}
\end{figure}

\subsection{Sensitivity for the Arriving Rate Vector}
The performance of MSMW on service frequency does not depend on the arriving rate vectors.
It always has a good performance on the service frequency no matter what the arriving rate vector is.
There are $16$ links in our simulations, with $\delta_i=N+5$ for link $i$.
We use an \textit{asymptotically decreasing} rate vector with $E(A_i)=\frac{1}{N*(i+k)}$ ($k = 1,2,\cdots,20$) and also a \textit{two tails} rate vector with $E(A_i)=\frac{1}{N*i}$ for the first $8$ links, and $E(A_i)=\frac{1}{N*(i+k)}$ ($k = 1,2,\cdots,20$) for the last $8$ links.
The ratio of the links following the service frequency constraints is shown in Fig.6.

As can be seen, MSMW performs well for both types of rate vectors, while the chosen parameters of RTO can only fit the \textit{asymptotically decreasing} rate vector.
We must update the parameters of RTO to deal with the \textit{two tails} one.
Therefore, our scheduling policy is much more applicable in the practical case where the arriving rate of each link is often periodically changing and unpredictable.

\begin{figure}
\centering
\includegraphics[scale=0.185]{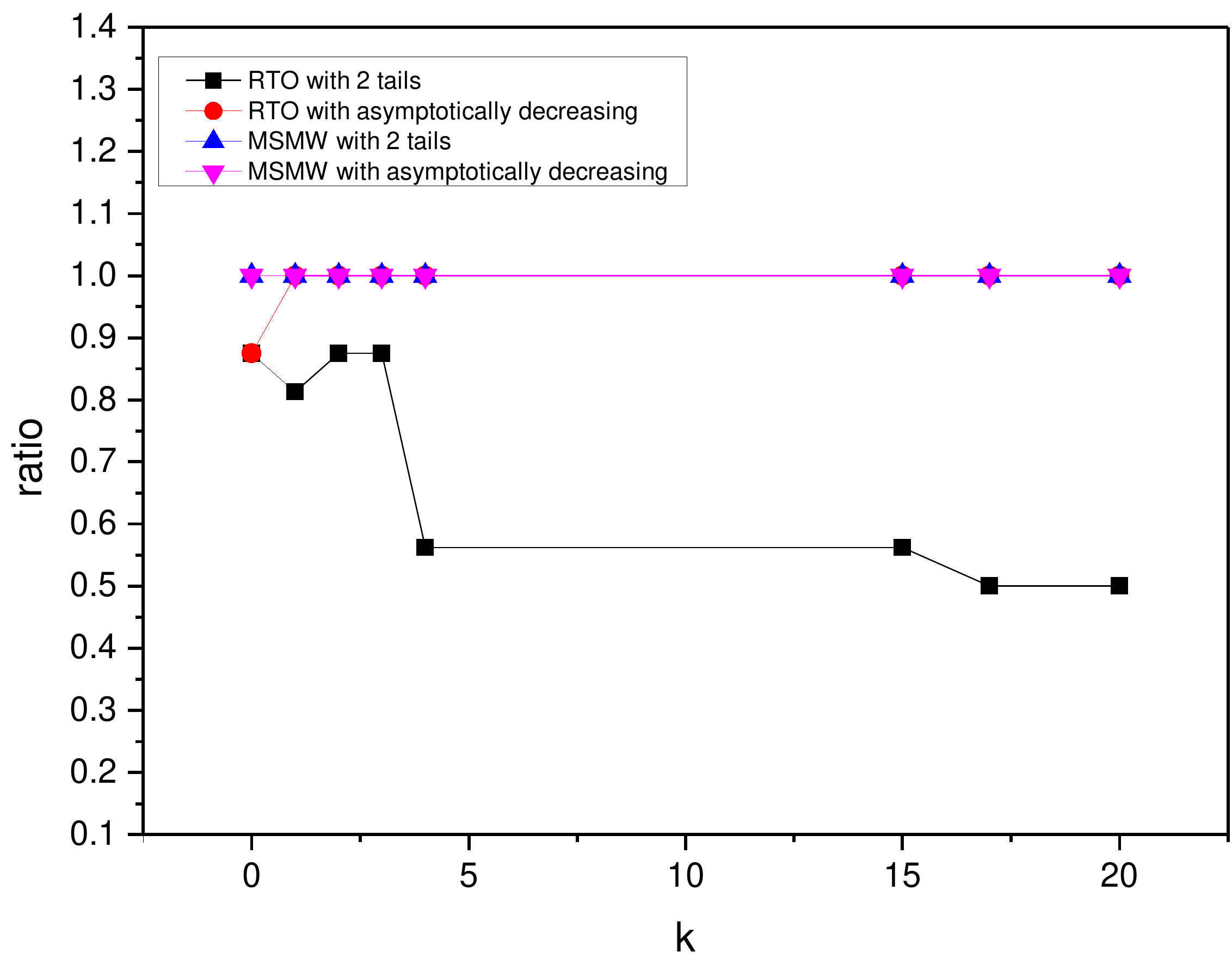}
\caption{Robust of MSMW.}
\label{fig:example}
\end{figure}

\subsection{Performance in General Networks}
In this part, we evaluates the performance of our scheduling policy in general networks.
We assume 40 nodes are randomly distributed in a 100m$\times$100m area.
The transmission and interference radius of a node is set to be 20m.
The channel condition of each link is set to be 0.5,
The transmission rate for each link is 2.
The average arriving rate is set to be $\frac{1}{8}$ and max arriving rate is $\frac{1}{4}$.
We use the maximal weight scheduling policy \cite{gupta2010delay} as our comparison.
In each slot, the maximal weight scheduling policy iteratively selects links with max queue length and minimum degree in the conflict graph, and deletes all links conflicting with the chosen one.

Fig.7 shows the results with different number of links.
The service constraints is set to be 10, and we change the number of links from 20 to 200.
As we see, MSMW has a good performance on both queue length and service frequency.
All the links follow the constraint on service frequency, while the maximal weight scheduling policy can only follow the constraints on 70 percents of its periods.
The average queue length in MSMW is less than 4 times of the that in maximal weight scheduling policy in worst case, \textit{i.e.}, 40/11.5 when there are 160 links.

\begin{figure}[htbp]
\centering
\subfloat[Ratio of Links Achieving Service Frequency]{
\label{fig:improved_subfig_a}

\centering
\includegraphics[scale=0.16]{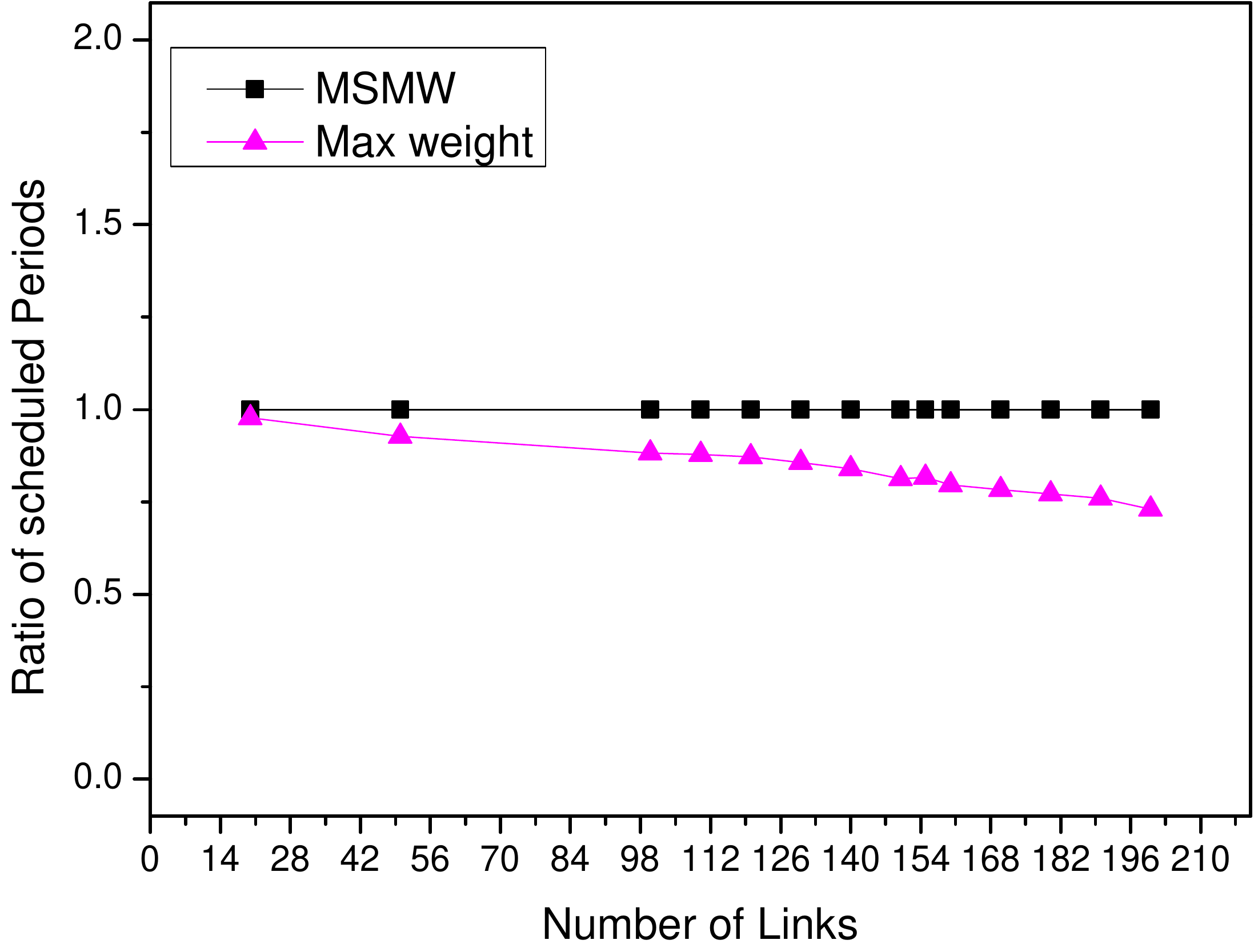}

}
\subfloat[Queue Length]{
\label{fig:improved_subfig_b}
\centering

\includegraphics[scale=0.16]{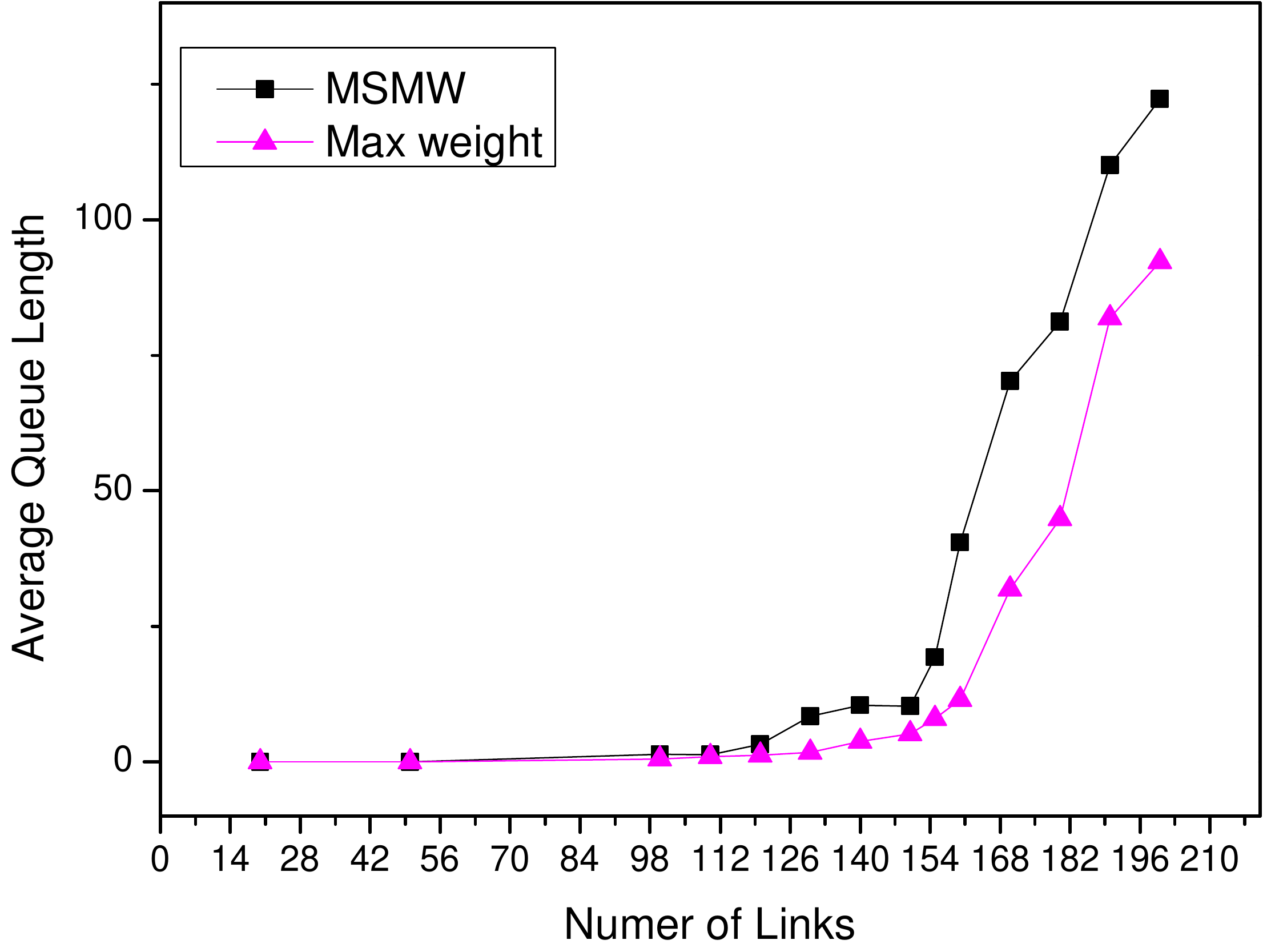}

}
\caption{Performance with different number of links}
\end{figure}

In Fig.8, we evaluate MSMW in general networks with different service frequency  constraints.
We set the number of links to be 160, and change the constraints from 5 to 15.
Based on Fig.8,  the average queue length becomes smaller as $\delta$ increases.
The reason is that more slots can be released for scheduling links according to the queue length.
When $\delta\ge12$, the queue length of MSMW is already small and close to that of maximal weight scheduling policy.
Meanwhile, the performance on service frequency quickly achieve the best in MSMW, while the ratio for our comparison method is only 90 percents even if $\delta=15$.

\begin{figure}[htbp]
\centering
\subfloat[Ratio of Links Achieving Service Frequency]{
\label{fig:improved_subfig_a}

\centering
\includegraphics[scale=0.16]{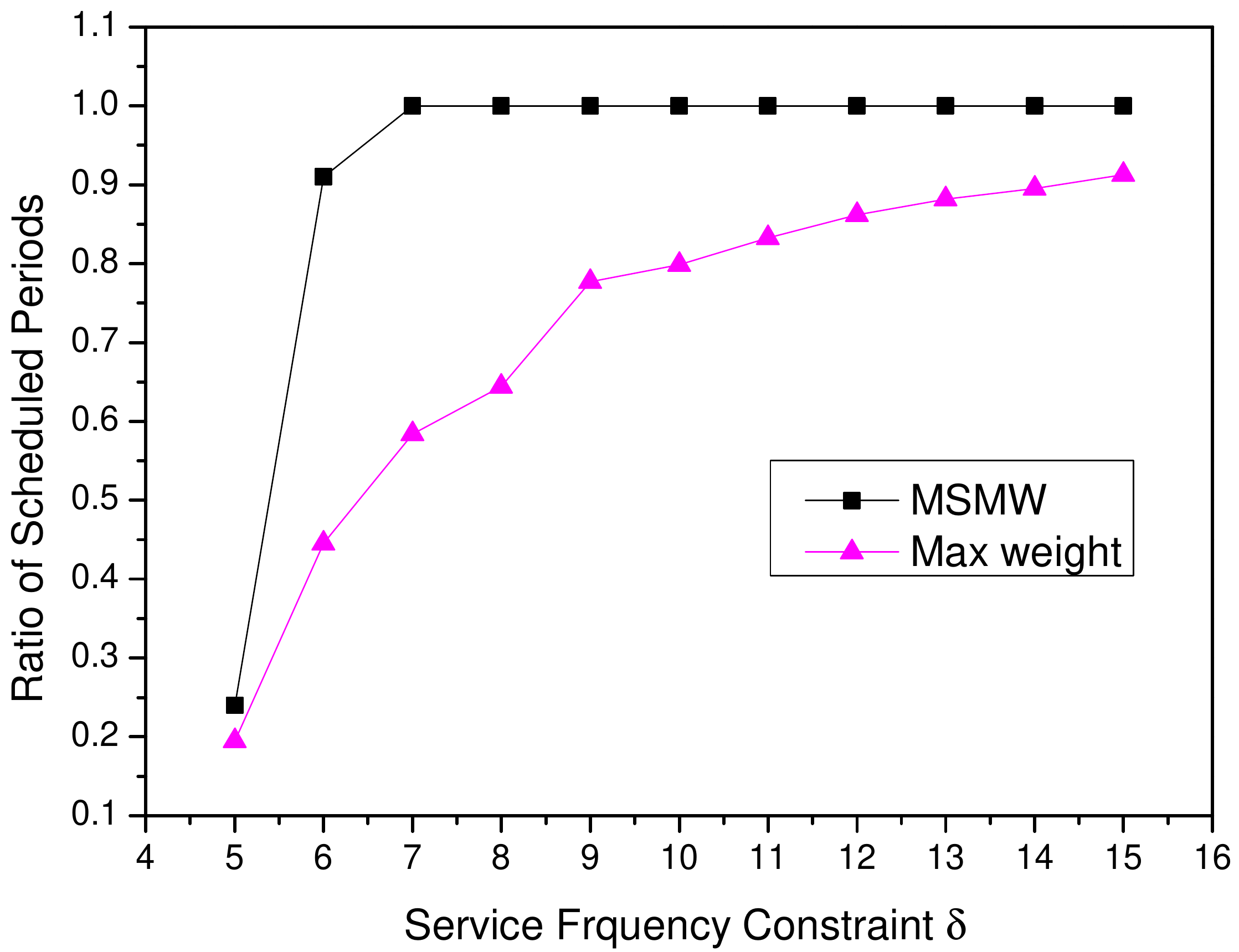}

}
\subfloat[Queue Length]{
\label{fig:improved_subfig_b}
\centering

\includegraphics[scale=0.16]{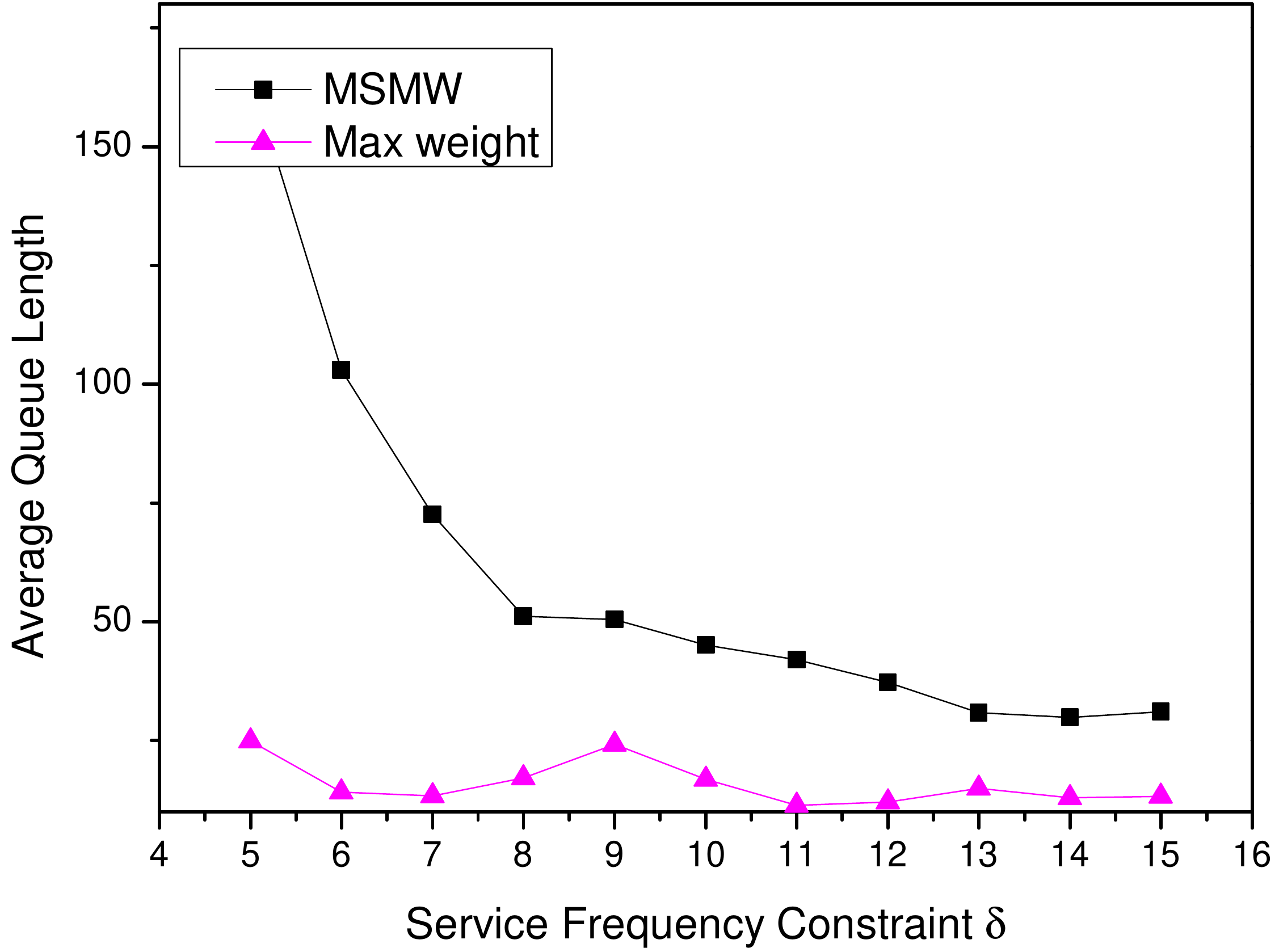}

}
\caption{Performance with different service regularity constraints}
\end{figure}

As a whole, MSMW is efficient no matter what the users' behaviors are.
It always guarantees service frequency while only increases the queue length by a little bit.

\section{Conclusion}
This paper investigates the problem of scheduling links in wireless networks with service frequency constraints.
To the best of knowledge, this is the first work dealing with traffics in wireless networks with stringent service frequency requirements.
We improve the typical network models to handle the constraints.
A new scheduling policy called MSMW is proposed, and it is proved to be throughput-optimal in collocated networks.
Furthermore, this paper discusses the decreasing of capacity region and the changing of queue length caused by the existence of service frequency constraints.
The simulation results show that MSMW performs well on the aspects of service frequency and total queue length.
As future works, we are interested in the relationship between the system performance and the correlation of arriving rates and service frequency constraints.

\bibliographystyle{ieeetr}
\bibliography{Interservicetime}

\begin{thebibliography}{10}

\bibitem{atzori2010internet}
L.~Atzori, A.~Iera, and G.~Morabito, ``The internet of things: A survey,'' {\em
  Computer networks}, vol.~54, no.~15, pp.~2787--2805, 2010.

\bibitem{chen2007applications}
M.~Chen, S.~Gonzalez, and V.~Leung, ``Applications and design issues for mobile
  agents in wireless sensor networks,'' {\em Wireless Communications, IEEE},
  vol.~14, no.~6, pp.~20--26, 2007.

\bibitem{11jaramillo2011scheduling}
J.~J. Jaramillo, R.~Srikant, and L.~Ying, ``Scheduling for optimal rate
  allocation in ad hoc networks with heterogeneous delay constraints,'' {\em
  Selected Areas in Communications, IEEE Journal on}, vol.~29, no.~5,
  pp.~979--987, 2011.

\bibitem{2013deficitkang2013performance}
X.~Kang, W.~Wang, J.~J. Jaramillo, and L.~Ying, ``On the performance of
  largest-deficit-first for scheduling real-time traffic in wireless
  networks,'' in {\em Proceedings of the fourteenth ACM international symposium
  on Mobile ad hoc networking and computing}, pp.~99--108, ACM, 2013.

\bibitem{13interservicetimeli2013throughput}
R.~Li, A.~Eryilmaz, and B.~Li, ``Throughput-optimal wireless scheduling with
  regulated inter-service times,'' in {\em INFOCOM, 2013 Proceedings IEEE},
  pp.~2616--2624, 2013.

\bibitem{13heavytailli2013heavy}
B.~Li, R.~Li, and A.~Eryilmaz, ``Heavy-traffic-optimal scheduling with regular
  service guarantees in wireless networks,'' in {\em MobiHoc, 2013 Proceedings
  ACM}, pp.~79--88, ACM, 2013.

\bibitem{liu2013user}
Y.~Liu, S.~Dey, D.~Gillies, F.~Ulupinar, and M.~Luby, ``User experience
  modeling for dash video,'' in {\em Packet Video Workshop (PV), 2013 20th
  International}, pp.~1--8, IEEE, 2013.

\bibitem{27tassiulas1992stability}
L.~Tassiulas and A.~Ephremides, ``Stability properties of constrained queueing
  systems and scheduling policies for maximum throughput in multihop radio
  networks,'' {\em Automatic Control, IEEE Transactions on}, vol.~37, no.~12,
  pp.~1936--1948, 1992.

\bibitem{gupta2010delay}
G.~R. Gupta and N.~B. Shroff, ``Delay analysis for wireless networks with
  single hop traffic and general interference constraints,'' {\em IEEE/ACM
  Transactions on Networking (TON)}, vol.~18, no.~2, pp.~393--405, 2010.

\bibitem{boyaci2014optimal}
C.~Boyaci and Y.~Xia, ``Optimal delay bound for maximum weight scheduling
  policy in wireless networks,'' in {\em INFOCOM, 2014 Proceedings IEEE},
  pp.~565--573, IEEE, 2014.

\bibitem{6pickcomparebui2009distributed}
L.~X. Bui, S.~Sanghavi, and R.~Srikant, ``Distributed link scheduling with
  constant overhead,'' {\em Networking, IEEE/ACM Transactions on}, vol.~17,
  no.~5, pp.~1467--1480, 2009.

\bibitem{23CSMAni2012q}
J.~Ni, B.~Tan, and R.~Srikant, ``Q-csma: Queue-length-based csma/ca algorithms
  for achieving maximum throughput and low delay in wireless networks,'' {\em
  Networking, IEEE/ACM Transactions on}, vol.~20, no.~3, pp.~825--836, 2012.

\bibitem{19opporscheliu2003framework}
X.~Liu, E.~K. Chong, and N.~B. Shroff, ``A framework for opportunistic
  scheduling in wireless networks,'' {\em Computer Networks}, vol.~41, no.~4,
  pp.~451--474, 2003.

\bibitem{lin2006impact}
X.~Lin and N.~B. Shroff, ``The impact of imperfect scheduling on cross-layer
  congestion control in wireless networks,'' {\em Networking, IEEE/ACM
  Transactions on}, vol.~14, no.~2, pp.~302--315, 2006.

\bibitem{24centralsakai2003note}
S.~Sakai, M.~Togasaki, and K.~Yamazaki, ``A note on greedy algorithms for the
  maximum weighted independent set problem,'' {\em Discrete Applied
  Mathematics}, vol.~126, no.~2, pp.~313--322, 2003.

\bibitem{28centraldwan2011wireless}
P.-J. Wan, O.~Frieder, X.~Jia, F.~Yao, X.~Xu, and S.~Tang, ``Wireless link
  scheduling under physical interference model,'' in {\em INFOCOM, 2011
  Proceedings IEEE}, pp.~838--845, IEEE, 2011.

\bibitem{4distributedbasagni2001finding}
S.~Basagni, ``Finding a maximal weighted independent set in wireless
  networks,'' {\em Telecommunication Systems}, vol.~18, no.~1-3, pp.~155--168,
  2001.

\bibitem{10distributedhoepman2004simple}
Hoepman, ``Simple distributed weighted matchings,'' {\em arXiv cs/0410047},
  2004.

\bibitem{13distributedjoo2013distributed}
C.~Joo, X.~Lin, J.~Ryu, and N.~B. Shroff, ``Distributed greedy approximation to
  maximum weighted independent set for scheduling with fading channels,'' in
  {\em Proceedings of the fourteenth ACM international symposium on Mobile ad
  hoc networking and computing}, pp.~89--98, ACM, 2013.

\bibitem{9hou2010utility}
I.~Hou, P.~Kumar, {\em et~al.}, ``Utility-optimal scheduling in time-varying
  wireless networks with delay constraints,'' in {\em Proceedings of the
  eleventh ACM international symposium on Mobile ad hoc networking and
  computing}, pp.~31--40, ACM, 2010.

\bibitem{10jaramillo2010optimal}
J.~J. Jaramillo and R.~Srikant, ``Optimal scheduling for fair resource
  allocation in ad hoc networks with elastic and inelastic traffic,'' in {\em
  INFOCOM, 2010 Proceedings IEEE}, pp.~1--9, IEEE, 2010.

\bibitem{8hou2010scheduling}
I.-H. Hou and P.~Kumar, ``Scheduling heterogeneous real-time traffic over
  fading wireless channels,'' in {\em INFOCOM, 2010 Proceedings IEEE},
  pp.~1--9, IEEE, 2010.

\bibitem{leonardi2001stability}
E.~Leonardi, M.~Mellia, F.~Neri, and M.~Ajmone~Marsan, ``On the stability of
  input-queued switches with speed-up,'' {\em Networking, IEEE/ACM Transactions
  on}, vol.~9, no.~1, pp.~104--118, 2001.

\end{thebibliography}
\section*{Appendix A}
The proof of Lemma 4 is shown here:
\begin{proof}
\textit{Part 1:}
According to Lemma 2, each link $i$ can be scheduled at least once in every $\delta_i$ slots.
Then it can deliver an expectation of $Min\{\frac{1}{\delta_i}, \frac{\lambda_i}{r_ic_i}\}*\delta_i$ packets each time if the link is scheduled only once.
Let $k=1,2,\cdots$.
We find that the expectation queue length of link $i$ at the end of every $k* T$ slots follows
\begin{equation}
 E(Q_i[k* T])\leq Min\{\frac{1}{\delta_i}, \frac{\lambda_i}{r_ic_i}\}*\delta_i+Max\{0,\frac{\lambda_i}{r_ic_i}-\frac{1}{\delta_i}\}*k* T
\end{equation}
where the first part on the right side means the maximum number of packets arrives since link $i$'s last transmission, and the second part is the extra length of the queue if link $i$ has never been scheduled twice in every $\delta_i$ slots.
In (14), the size of the right side approaches to infinite as $k$ increases.

We can further lower the upper bound of $E(Q_i[k*T])$.
We will show that $E(Q_i[k*T])$ is actually bounded despite the increasing of $k$.
We partition the arriving rate (or packets) of link $i$ into two parts:
\textit{round robin} component with $\lambda_{irr}=Min\{\frac{1}{\delta_i}, \frac{\lambda_i}{r_ic_i}\}$, and \textit{maximum weight}  component with $\lambda_{imw}=Max\{0,\frac{\lambda_i}{r_ic_i}-\frac{1}{\delta_i}\}$.

$E(Q_i[k*T])$ is bounded if the \textit{maximum weight} component of link $i$ equals 0, which means link $i$ only needs a single scheduling in every $\delta_i$ slots.

Now consider all the links whose \textit{maximum weight} components are larger than 0.
There are packets in these links that require extra scheduling.
We can generalize the scheduling of \textit{maximum weight} component into a normal maximum weight scheduling if we maintain an extra queue for each link with arriving rate $Max\{A_i[t]-\lambda_i,0\}$ and select the link with the maximum weight based on this queue length each time when we schedule a link in $gp_0$.
This is a typical maximum weight scheduling with $\lambda_{i}'=(\frac{\lambda_i}{r_ic_i}-\frac{1}{\delta_i})/(1-\sum_{i=1}^{N}\frac{1}{\delta_i})$ for link $i$.
According to Lemma 3,
\begin{equation}
\begin{split}
\sum_{i=1}^{N}\lambda_i'&=\sum_{i=1}^{N}Max\{\frac{\lambda_i}{r_ic_i}-\frac{1}{\delta_i},0\}/(1-\sum_{i=1}^{N}\frac{1}{\delta_i})\\
 & < (1-\sum_{i=1}^{N}\frac{1}{\delta_i})=1
\end{split}
\end{equation}
Thus, in the modified system, $\lambda_i'$s are stable and the queue length is bounded by a constant.
By Theorem 2, the expectation of the total queue length is less than $B_1=\frac{1}{2(1-\sum_{j=1}^{N}\frac{1}{\delta_j})^2}\sum_{l_i\in\{l_j|\frac{\lambda_j}{r_jc_j}-\frac{1}{\delta_j}>0\}}[(1-\sum_{j=1}^{N}\frac{1}{\delta_j})(\frac{\lambda_i}{r_ic_i}-\frac{1}{\delta_i})+Var(\frac{A_i[t]}{r_ic_i}-\frac{1}{\delta_i})-(\frac{\lambda_i}{r_ic_i}-\frac{1}{\delta_i})^2]$.
$B_i$ is a constant since $N$, $\delta_i$'s, $\lambda_i$'s, $r_ic_i$'s, and the second moment of $A_i[t]$'s are all bounded.

MSMW processes links according to the total weight, and it schedules a link whose total queue length is the maximum in $gp_0$.
This makes the scheduling different from the modified system.
A link with the maximum queue length in the \textit{maximum weight} component is potentially not the largest when adding up the queue length in the \textit{round robin} component.
This leads to the increasing of the total queue length in the \textit{maximum weight} component when we compare it with the modified system.
However, the link with the maximum queue length in the \textit{maximum weight} component can still be scheduled before the queue length goes to infinity.
This is due to the fact that the queue length in the \textit{round robin} component is bounded in MSMW.
A link with the maximum queue length in the \textit{maximum weight} component will finally grow to the largest one in the total queue length.
Therefore, the total queue length of all the links in the \textit{maximum weight} component can be bounded by $h*B_1$, where $h$ is a constant.

Now we can conclude that $E(\sum_{i=1}^{N}Q_i[k*T])$ is bounded as follows:
\begin{equation}
 E(\sum_{i=1}^{N}Q_i[k*T])\leq \sum_{i=1}^{N}Min\{\frac{1}{\delta_i}, \frac{\lambda_i}{r_ic_i}\}*\delta_i + h*B_1
\end{equation}
\textit{Part 2:}
The second part of the proof is to show that in an arbitrary time slot $t$,
$E(\sum_{i=1}^{N}Q_i[t])$ is bounded.

As we know, the total queue length of all the links usually decreases when MSMW schedules a link in $gp_0$, or the system may be instable.
An exception is the situation when the queue length of all the links are smaller than the transmission rates.
It means the total queue length of all the links is finite, which will not always happen.
Therefore, the largest $E(\sum_{i=1}^{N}Q_i[t])$ often appears when MSMW just scheduled a series of links not in $gp_0$.

For an arbitrary slot $t$, we have $t=k_t*T+t\mod{T}$, where $k_t$ is a non-negative integer.
Then $E(\sum_{i=1}^{N}Q_i[t])$ will be the largest when MSMW schedules the links not in $gp_0$ in all the last $t\mod{T}$ slots.
Based on the design of MSMW, the number of slots for the links not in $gp_0$ is no more than $\sum_{j=1}^{N}\frac{1}{\delta_j}*T$ in a time period of length $T$.
Then we have
\begin{equation}
\begin{split}
E(\sum_{i=1}^{N}Q_i[t]) & \leq E(\sum_{i=1}^{N}Q_i[k_t*T])+\sum_{i=1}^{N}(\sum_{j=1}^{N}\frac{1}{\delta_j}*\frac{\lambda_i}{r_ic_i}*T)-\Gamma\\
&\leq E(\sum_{i=1}^{N}Q_i[k_t*T])+\sum_{i=1}^{N}(\sum_{j=1}^{N}\frac{1}{\delta_j}*\frac{\lambda_i}{r_ic_i}*T)
\end{split}
\end{equation}
where $\Gamma$ relates to the size of packets delivered in $t\mod{T}$ slots.

Finally, we combine Equations (16) and (17),
\begin{equation}
\begin{split}
E(\sum_{i=1}^{N}Q_i[t])\leq &\sum_{i=1}^{N}Min\{\frac{1}{\delta_i}, \frac{\lambda_i}{r_ic_i}\}*\delta_i\\
&+ h*B_1+\sum_{i=1}^{N}(\sum_{j=1}^{N}\frac{1}{\delta_j}*\frac{\lambda_i}{r_ic_i}*T)\\
&\leq N+h*B_1+\sum_{i=1}^{N}(\sum_{j=1}^{N}\frac{1}{\delta_j}*\frac{\lambda_i}{r_ic_i}*T)
\end{split}
\end{equation}
where $B_1=\frac{1}{2(1-\sum_{j=1}^{N}\frac{1}{\delta_j})^2}\sum_{l_i\in\{l_j|\frac{\lambda_j}{r_jc_j}-\frac{1}{\delta_j}>0\}}[(1-\sum_{j=1}^{N}\frac{1}{\delta_j})(\frac{\lambda_i}{r_ic_i}-\frac{1}{\delta_i})+Var(\frac{A_i[t]}{r_ic_i}-\frac{1}{\delta_i})-(\frac{\lambda_i}{r_ic_i}-\frac{1}{\delta_i})^2]$, $N$, $h$, $B_1$, $\{\lambda_i,\delta_i,r_ic_i\}$, and $T$ are constant.
\end{proof}

\section*{Appendix B}
The proof of Theorem 5 employs some idea in Lemma 4.
We consider a new scheduling policy 2-MSMW that is similar to MSMW.
2-MSMW maintains two queues for each link.
There are $A_i[t]$ packets arriving at link $i$ in slot $t$.
2-MSMW adds the first $\frac{r_ic_i}{\delta_i}$ packets to $Q_{i1}$, and the other packets to $Q_{i2}$ if $A_i[t]>\frac{r_ic_i}{\delta_i}$.
In each slot, 2-MSMW maintains $ST_i$ for every link in the same manner as MSMW.
2-MSMW schedules the links based on the weights of $Q_{i1}$ when it selects a link not in $gp_0$, and based on the weights of $Q_{i2}$ when selecting links from $gp_0$.
Furthermore, 2-MSMW only transmits packets in the queue that is used to calculate the weight in that slot.
The total queue length of 2-MSMW is $E(\sum_{i=1}^{N}Q'_i[t])=E(\sum_{i=1}^{N}(Q_{i1}[t]+Q_{i2}[t]))$.

Now we consider the total queue length of MSMW $E(\sum_{i=1}^{N}Q_i[t])$.	
The stages of each link in the two scheduling policies are identical.
In each time slot, MSMW calculates the weight of each link with $Q_{i1}+Q_{i2}$, and we have $Max\{Q_{i1}+Q_{i2}\}\geq Max\{Q_{i1},Q_{i2}\}$.
Therefore, the number of the packets that departure from the system in MSMW is no less than the number of the packets that are transmitted in 2-MSMW.
The numbers are the same only when 2-MSMW makes full use of its channel resource.
Furthermore, the arriving rate vectors of them are also the same.
Then we simply add up the transmitted packets in each slot and conclude our result as follows:
$E(\sum_{i=1}^{N}Q_i[t])\leq E(\sum_{i=1}^{N}Q'_i[t])$.

The evolving of $Q_{i1}$ follows a round-robin-like manner where link $i$ transmits once in every $\delta_i$ slots.
This also means the evolving of $Q_{i1}$'s is periodical when the system has been carrying out 2-MSMW for a sufficient long time and turned into a steady state.
Then
\begin{equation}
\begin{split}
&\text{\ \ \ }E(\lim_{t\rightarrow \infty}\sum_{t'=1}^{t}\sum_{i=1}^{N}Q_{i1}[t'])=E(\lim_{t\rightarrow \infty}\sum_{i=1}^{N}\sum_{t'=1}^{t}Q_{i1}[t'])\\
&=\lim_{t\rightarrow \infty}\sum_{i=1}^{N}Min\{\frac{\lambda_i}{r_ic_i},\frac{1}{\delta_i}\}*(1+2+\cdots+\delta_i)*\frac{t}{\delta_i}\\
&=\lim_{t\rightarrow \infty}\sum_{i=1}^{N}Min\{\frac{\lambda_i}{r_ic_i},\frac{1}{\delta_i}\}*\frac{\delta_i(1+\delta_i)}{2}*\frac{t}{\delta_i}\\
&=\lim_{t\rightarrow \infty}\sum_{i=1}^{N}Min\{\frac{\lambda_i}{r_ic_i},\frac{1}{\delta_i}\}*\frac{(1+\delta_i)*t}{2}
\end{split}
\end{equation}
$E(Q_{i1})$ has been estimated in Lemma 4.
Since 2-MSMW processes $Q_{i1}$ and $Q_{i2}$ separately, we have
\begin{equation}
\begin{split}
&\text{\ \ \ }E(\sum_{i=1}^{N}Q'_i[t])=E(\sum_{i=1}^{N}(Q_{i1}[t]+Q_{i2}[t]))\\
&=\lim_{t\rightarrow \infty}\frac{\sum_{t'=1}^{t}\sum_{i=1}^{N}Q_{i1}[t']}{t}+\lim_{t\rightarrow \infty}\frac{\sum_{t'=1}^{t}\sum_{i=1}^{N}Q_{i2}[t']}{t}\\
&=\lim_{t\rightarrow \infty}\frac{\sum_{i=1}^{N}\sum_{t'=1}^{t}Q_{i1}[t']}{t}+E(\sum_{i=1}^{N}Q_{i2}[t])\\
&\leq \sum_{i=1}^{N}Min\{\frac{\lambda_i}{r_ic_i},\frac{1}{\delta_i}\}*\frac{(1+\delta_i)}{2}+B_1\\
\end{split}
\end{equation}
Finally,

\begin{equation}
\begin{split}
E(\sum_{i=1}^{N}Q_i[t])&\leq \sum_{i=1}^{N}(Min\{\frac{\lambda_i}{r_ic_i},\frac{1}{\delta_i}\}*\frac{(1+\delta_i)}{2})\\
&+\frac{1}{2(1-\sum_{i=1}^{N}\frac{1}{\delta_i})^2}\sum_{l_i\in\{l_j|\frac{\lambda_j}{r_jc_j}-\frac{1}{\delta_j}>0\}}[(1-\sum_{k=1}^{N}\frac{1}{\delta_k})\\
&*(\frac{\lambda_i}{r_ic_i}-\frac{1}{\delta_i})+Var(\frac{A_i[t]}{r_ic_i}-\frac{1}{\delta_i})-(\frac{\lambda_i}{r_ic_i}-\frac{1}{\delta_i})^2]\\
&\leq \frac{1}{2}+\sum_{i=1}^{N}\frac{1}{\delta_i}*\frac{\delta_i}{2}+\frac{1}{2\epsilon^2}\sum_{l_i\in\{l_j|\frac{\lambda_j}{r_jc_j}-\frac{1}{\delta_j}>0\}}\\
&[\epsilon*(\frac{\lambda_i}{r_ic_i}-\frac{1}{\delta_i})-(\frac{\lambda_i}{r_ic_i}-\frac{1}{\delta_i})^2+\frac{1}{(r_ic_i)^2}Var(A_i[t])]\\
&\leq \frac{N+1}{2}+\frac{1}{2\epsilon^2}\sum_{l_i\in L'}[\epsilon*\frac{\epsilon}{2}-(\frac{\epsilon}{2})^2]\\
&+\frac{1}{2\epsilon^2}\sum_{l_i\in L'}\frac{Var(A_i[t])}{(r_ic_i)^2}\\
&\leq \frac{N+1}{2}+\frac{N'}{8}+\frac{1}{2\epsilon^2}\sum_{l_i\in L'}\frac{Var(A_i[t])}{(r_ic_i)^2}
\end{split}
\end{equation}

When $\epsilon = 0$, the upper bound of the queue length can be directly derived from (14) as follows:
\begin{equation}
\lim_{t\rightarrow \infty}\frac{\sum_{t'=1}^{t}\sum_{i=1}^{N}Q_i[t']}{t}\leq \sum_{i=1}^{N}(Min\{\frac{\lambda_i}{r_ic_i},\frac{1}{\delta_i}\}*\frac{(1+\delta_i)}{2})
\end{equation}

\end{document}